\documentclass[10pt]{article}
\pdfoutput=1
\pdfcompresslevel=9

\usepackage{mmap,amsmath,amsthm,amsfonts,
textalpha,longtable,authblk}

\usepackage[A4]{vmargin}
\setmarginsrb{25mm}{25mm}{25mm}{24mm}{0pt}{0pt}{12pt}{11mm}

\usepackage[arrow,matrix]{xy}
\let\ar\relax

\usepackage[unicode=true,colorlinks=true,linkcolor=blue,
citecolor=blue,urlcolor=blue]{hyperref}

\setlength{\LTleft}{0pt}
\setlength{\LTright}{0pt}

\sloppy

\DeclareMathOperator{\GL}{GL}
\DeclareMathOperator{\dom}{dom}
\DeclareMathOperator{\ar}{ar}
\DeclareMathOperator{\Tm}{Tm}
\DeclareMathOperator{\Pro}{Pr}
\DeclareMathOperator{\negl}{negl}
\DeclareMathOperator{\supp}{supp}
\DeclareMathOperator{\SLP}{SLP}
\DeclareMathOperator{\Smpl}{Smpl}
\DeclareMathOperator{\CNOT}{CNOT}

\newcommand{\B}{\{0,1\}}
\newcommand{\latin}{\textit}
\newcommand{\pd}{\mathcal}
\newcommand{\cl}{\mathfrak}
\newcommand{\fm}{\mathtt}
\newcommand{\cs}{,\,}
\newcommand{\st}{\,|\,}
\newcommand{\rep}[2]{[#1]_{#2}}
\newcommand{\id}{\mathrm{id}}
\newcommand{\alg}[1]{\langle#1\rangle}
\newcommand{\pns}{\vartriangleleft}
\newcommand{\acl}[1]{\lvert#1\rvert}
\newcommand{\N}{\mathbb N}
\newcommand{\Z}{\mathbb Z}
\newcommand{\R}{\mathbb R}
\newcommand{\suz}{\mathbb}
\newcommand{\UZ}{\Z^\star}
\newcommand{\rv}{\mathbf}
\newcommand{\rvd}{\sim}
\newcommand{\Prob}[1]{\Pro[#1]}
\newcommand{\ket}[1]{\lvert#1\rangle}
\newcommand{\nreg}{\underline}
\newcommand{\Forall}[1]{\forall\,#1\ }

\newcommand{\impl}{\xyarrow@{=>}}
\newcommand{\Dk}{\text{if }\Forall k(\supp\pd D_k\subseteq D_k)}

\newcommand{\headPsi}{\texorpdfstring
{$\mathbf{\Psi}$}{\textPsi}}
\newcommand{\headOmega}{\texorpdfstring
{$\mathbf{\Omega}$}{\textOmega}}
\newcommand{\sectionheadOmega}{\texorpdfstring
{$\protect\sectionheadmathbf{\Omega}$}{\textOmega}}
\newcommand{\headclV}{\texorpdfstring
{$\boldsymbol{\cl V}$}{V}}

\newcommand{\ORCID}{0000-0002-3960-3867}
\newcommand{\keywords}{post-quantum cryptography, universal
algebra, expanded group, family of computational universal
algebras, black-box model, weakly pseudo-free family}

\let\sectionheadmathbf\mathbf

\newenvironment{arenum}{\begin{list}
{\textup{\arabic{enumi}.}}{\usecounter{enumi}
\settowidth{\labelwidth}{\textup{0.}}
\addtolength{\leftmargin}{.7\parindent}
}}{\end{list}}

\newenvironment{roenum}{\begin{list}
{\textup{(\roman{enumi})}}{\usecounter{enumi}
\settowidth{\labelwidth}{\textup{(iii)}}
\addtolength{\leftmargin}{.7\parindent}
}}{\end{list}}

\theoremstyle{plain}
\newtheorem{theorem}{Theorem}[section]
\newtheorem{lemma}[theorem]{Lemma}
\newtheorem{proposition}[theorem]{Proposition}

\theoremstyle{definition}
\newtheorem{definition}[theorem]{Definition}
\newtheorem{example}[theorem]{Example}

\theoremstyle{remark}
\newtheorem{remark}[theorem]{\bfseries Remark}

\title{There Are No Post-Quantum Weakly Pseudo-Free Families
in Any Nontrivial Variety of Expanded Groups}

\author{Mikhail Anokhin}

\affil{Information Security Center\\Faculty of Computational
Mathematics and Cybernetics\\Lomonosov University (a.k.a.\
Lomonosov Moscow State University)\\Moscow, Russia

\smallskip

E-mail address:~\texttt{anokhin@mccme.ru}

ORCID:~\href{https://orcid.org/\ORCID}{\ORCID}}

\date{}

\makeatletter\hypersetup{pdftitle=\@title,pdfauthor={Mikhail
Anokhin},pdfkeywords=\keywords}\makeatother

\begin{document}
{\renewcommand{\thefootnote}{}\footnotetext{This is an
extended version of the paper (with the same title)
published in \textit{International Journal of Algebra and
Computation}, 34(4):471--490, 2024.}}

\maketitle

\begin{abstract}
Let $\Omega$ be a finite set of finitary operation symbols
and let $\cl V$ be a nontrivial variety of
$\Omega$-algebras. Assume that for some set
$\Gamma\subseteq\Omega$ of group operation symbols, all
$\Omega$-algebras in $\cl V$ are groups under the operations
associated with the symbols in~$\Gamma$. In other words,
$\cl V$ is assumed to be a nontrivial variety of expanded
groups. In particular, $\cl V$ can be a nontrivial variety
of groups or rings. Our main result is that there are no
post-quantum weakly pseudo-free families in $\cl V$, even in
the worst-case setting and/or the black-box model. In this
paper, we restrict ourselves to families $(H_d\st d\in D)$
of computational and black-box $\Omega$-algebras (where
$D\subseteq\B^*$) such that for every $d\in D$, each element
of $H_d$ is represented by a unique bit string of length
polynomial in the length of~$d$. In our main result, we use
straight-line programs to represent nontrivial relations
between elements of $\Omega$-algebras. Note that under
certain conditions, this result depends on the
classification of finite simple groups. Also, we define and
study some types of weak pseudo-freeness for families of
computational and black-box $\Omega$-algebras.

\medskip

\textbf{Keywords:}~\keywords.
\end{abstract}

{\renewcommand{\mathbf}{}\renewcommand{\boldsymbol}{}
\tableofcontents}

\section{Introduction}

Let $\Omega$ be a finite set of finitary operation symbols
and let $\cl V$ be a variety of $\Omega$-algebras. (See
Subsection~\ref{ss:univalgprelim} for definitions.)
Informally, a family of computational $\Omega$-algebras is a
family of $\Omega$-algebras whose elements are represented
by bit strings in such a way that equality testing, the
fundamental operations, and generating random elements can
be performed efficiently. This means that each element $h$
of every $\Omega$-algebra in the family is assigned one or
more bit strings, called representations of~$h$. Efficient
algorithms performing the above-mentioned operations work
with these representations of elements; moreover, these
algorithms take as input the index of the $\Omega$-algebra
in the family. Loosely speaking, a family of computational
$\Omega$-algebras is called pseudo-free in $\cl V$ if all
members of this family belong to $\cl V$ and, given a random
member $H$ of the family (for a given security parameter)
and random elements $g_1,\dots,g_m\in H$ (where $m\ge1$), it
is computationally hard to find a system of equations
\begin{equation}
\label{e:syseq}
v_i(a_1,\dots,a_m;x_1,\dots,x_n)
=w_i(a_1,\dots,a_m;x_1,\dots,x_n),\quad i\in\{1,\dots,s\},
\end{equation}
in the variables $x_1,\dots,x_n$ together with elements
$h_1,\dots,h_n\in H$ such that
\begin{itemize}
\item for each $i\in\{1,\dots,s\}$,
$v_i(a_1,\dots,a_m;x_1,\dots,x_n)$ and
$w_i(a_1,\dots,a_m;x_1,\dots,x_n)$ are elements of the $\cl
V$-free $\Omega$-algebra freely generated by
$a_1,\dots,a_m,x_1,\dots,x_n$,

\item system~\eqref{e:syseq} is unsatisfiable in the $\cl
V$-free $\Omega$-algebra freely generated by
$a_1,\dots,a_m$, and

\item $v_i(g_1,\dots,g_m;h_1,\dots,h_n)
=w_i(g_1,\dots,g_m;h_1,\dots,h_n)$ in $H$ for all
$i\in\{1,\dots,s\}$.
\end{itemize}
If a family of computational $\Omega$-algebras satisfies
this definition with the additional requirement that $n=0$
(i.e., that the equations in~\eqref{e:syseq} be
variable-free), then this family is said to be weakly
pseudo-free in~$\cl V$. By fixing the number $s$ of
equations in the definition of a pseudo-free (resp., weakly
pseudo-free) family in $\cl V$, we obtain a definition of an
$s$-pseudo-free (resp., weakly $s$-pseudo-free) family
in~$\cl V$. Of course, pseudo-freeness (in any above
version) may depend heavily on the form in which
system~\eqref{e:syseq} is required to be found, i.e., on the
representation of such systems.

The notion of pseudo-freeness (which is a variant of weak
$1$-pseudo-freeness in the above sense) was introduced by
Hohenberger in~\cite[Section~4.5]{Hoh03} for black-box
groups. Rivest gave formal definitions of a pseudo-free
family of computational groups
(see~\cite[Definition~2]{Riv04}, \cite[Slide~17]{Riv04pres})
and a weakly pseudo-free one
(see~\cite[Slide~11]{Riv04pres}). These authors consider
(weak) pseudo-freeness only in the varieties of all groups
and of all abelian groups. Note that pseudo-freeness (resp.,
weak pseudo-freeness) in~\cite{Riv04, Riv04pres} is in fact
$1$-pseudo-freeness (resp., weak $1$-pseudo-freeness) in our
terminology. For motivation of the study of pseudo-freeness,
we refer the reader to~\cite{Hoh03, Riv04, Mic10}.

Suppose $\fm H=(H_d\st d\in D)$ is a family of computational
$\Omega$-algebras, where $D\subseteq\B^*$. (We specify only
the $\Omega$-algebras here.) Then this family is said to
have exponential size if there exists a polynomial $\xi$
such that $\acl{H_d}\le2^{\xi(\acl d)}$ for all $d\in D$
(see also~\cite[Definition~3.2]{Ano21}). The family $\fm H$
is called polynomially bounded if there exists a polynomial
$\eta$ such that the length of any representation of every
$h\in H_d$ is at most~$\eta(\acl d)$ for all $d\in D$ (see
also~\cite[Definition~3.3]{Ano21}). Of course, if $\fm H$ is
polynomially bounded, then it has exponential size. It
should be noted that a (weakly) pseudo-free family of
computational $\Omega$-algebras can have applications in
cryptography only if it is polynomially bounded or at least
has exponential size. Such families that do not have
exponential size \latin{per se} are of little interest; they
can be constructed unconditionally
(see~\cite[Subsection~3.4]{Ano21}). Finally, the family $\fm
H$ is said to have unique representations of elements if for
every $d\in D$, each element of $H_d$ is represented by a
unique bit string (see also~\cite[Definition~3.4]{Ano21}).
This property seems to be useful for applications. In this
paper, unless otherwise specified, families of computational
$\Omega$-algebras are assumed to be polynomially bounded and
to have unique representations of elements.

We emphasize that in the introduction, all results are
stated loosely. In particular, we do not mention the
probability distribution (depending on the security
parameter) according to which the index of the
$\Omega$-algebra in the family is sampled. Also, we usually
do not specify the representation of elements of the $\cl
V$-free $\Omega$-algebra by bit strings. This representation
is used for representing systems of the
form~\eqref{e:syseq}.

\subsection{Related Work}

Most researchers consider pseudo-freeness (in various
versions) in the varieties of all groups~\cite{Hoh03, Riv04,
Riv04pres, HT07, HIST09, Ano13}, of all abelian
groups~\cite{Hoh03, Riv04, Riv04pres, HT07, Mic10, JB09,
CFW11, FHIKS13, FHIS14ACISP, FHIS14IEICE, Ano18}, and of all
elementary abelian $p$-groups, where $p$ is a
prime~\cite{Ano17}. Surveys of this area can be found
in~\cite[Chapter~1]{Fuk14}, \cite[Section~1]{Ano18},
\cite[Subsection~1.1]{Ano21},
and~\cite[Subsection~1.1]{Ano22}.

We mention some conjectures and results concerning (weakly)
pseudo-free families of computational groups. In these
conjectures and results, families of computational groups
are presented in the form $((G_d,\pd G_d)\st d\in D)$, where
$D\subseteq\B^*$, $G_d$ is a group whose every element is
represented by a unique bit string of length polynomial in
the length of $d$, and $\pd G_d$ is a probability
distribution on $G_d$ ($d\in D$). Thus, these families are
polynomially bounded and have unique representations of
elements, as assumed above. Of course, the multiplication,
the inversion, and computing the identity element in $G_d$
should be performed efficiently when $d$ is given.
Furthermore, given $(d,1^k)$, one can efficiently generate
random elements of $G_d$ according to a probability
distribution that is statistically $2^{-k}$-close to~$\pd
G_d$. For a positive integer $n$, denote by $\Z_n$ the set
$\{0,\dots,n-1\}$ considered as a ring under addition and
multiplication modulo $n$ and by $\UZ_n$ the group of units
of~$\Z_n$. Also, let $\suz S_n$ and $\suz O_n$ be the
subgroups of squares in $\UZ_n$ (i.e., $\{z^2\bmod n\st
z\in\UZ_n\}$) and of elements of odd order in $\UZ_n$,
respectively. We denote by $\pd U(Y)$ the uniform
probability distribution on a nonempty finite set~$Y$.

Suppose $N$ is the set of all products of two distinct
primes. Rivest conjectured that the family $((\UZ_n,\pd
U(\UZ_n))\st n\in N)$ is pseudo-free in the variety $\cl A$
of all abelian groups (super-strong RSA conjecture,
see~\cite[Conjecture~1]{Riv04}, \cite[Slide~18]{Riv04pres}).
If both $p$ and $2p+1$ are prime numbers, then $p$ is called
a \emph{Sophie Germain prime} and $2p+1$ is said to be a
\emph{safe prime}. Let $S$ be the set of all products of two
distinct safe primes. Micciancio~\cite{Mic10} proved that
the family $((\UZ_n,\pd U(\suz S_n))\st n\in S)$ is
pseudo-free in $\cl A$ under the strong RSA assumption for
$S$ as the set of moduli. Informally, this assumption is
that, given a random $n\in S$ (for a given security
parameter) and a uniformly random $g\in\UZ_n$, it is
computationally hard to find an integer $e\ge2$ together
with an $e$th root of $g$ in~$\UZ_n$. It is easy to see that
if $n\in S$ and the prime factors of $n$ are different from
$5$, then $\suz S_n=\suz O_n$. Hence the above result of
Micciancio remains valid if we replace $\suz S_n$ by $\suz
O_n$ in it. The same result as in~\cite{Mic10}, but with
slightly different representations of group elements by bit
strings and different distributions of random elements of
the groups, was obtained by Jhanwar and Barua~\cite{JB09}.
Moreover, Catalano, Fiore, and Warinschi~\cite{CFW11} proved
that under the same assumption as in the above result of
Micciancio, the family $((\UZ_n,\pd U(\suz S_n))\st n\in S)$
satisfies an apparently stronger condition than
pseudo-freeness in~$\cl A$. That condition, called adaptive
pseudo-freeness, was introduced in~\cite{CFW11}.

Note that it is unknown whether the set $S$ is infinite.
Indeed, this holds if and only if there are infinitely many
Sophie Germain primes, which is a well-known unproven
conjecture in number theory. Thus, the assumption used
in~\cite{Mic10, JB09, CFW11} is very strong.

A natural candidate for a pseudo-free family in the variety
of all groups is $((\GL_2(\Z_n),\pd U(\GL_2(\Z_n)))\st n\in
N)$, where $\GL_2(\Z_n)$ is the group of invertible
$2\times2$ matrices over $\Z_n$ (see~\cite{CV13}). However,
(weak) pseudo-freeness of this family under a standard
cryptographic assumption is still unproven. Assume that
finding a nontrivial divisor of a random number in some set
$C$ of composite numbers (for a given security parameter) is
a computationally hard problem. Then Anokhin~\cite{Ano13}
constructed an exponential-size pseudo-free family in the
variety of all groups. That family is not polynomially
bounded and does not have unique representations of
elements. Moreover, each element of any group in that family
is represented by infinitely many bit strings. Note that the
family presented in~\cite{Ano13} is pseudo-free with respect
to a natural but non-succinct representation of elements of
the free group by bit strings. Under the same assumption,
Anokhin~\cite{Ano18} proved that the family $((\suz O_n,\pd
U(\suz O_n))\st n\in C)$ is weakly pseudo-free in~$\cl A$.
It is evident that this result also holds for $((\UZ_n,\pd
U(\suz O_n))\st n\in C)$. Compared to the above result of
Micciancio, this is a weaker statement, but it is proved
under a much weaker cryptographic assumption.

Suppose $p$ is an arbitrary fixed prime number and let $\cl
A_p$ be the variety of all elementary abelian $p$-groups.
Then pseudo-free families in $\cl A_p$ exist if and only if
certain homomorphic collision-resistant $p$-ary hash
function families exist or, equivalently, certain
homomorphic one-way families of functions exist.
See~\cite[Theorem~4.12]{Ano17} for details. Note that for
families of computational elementary abelian $p$-groups,
pseudo-freeness in $\cl A_p$ is equivalent to weak
pseudo-freeness in $\cl A_p$
(see~\cite[Theorem~3.7]{Ano17}).

There are many constructions of cryptographic objects based
on classical algebraic structures (e.g., groups). However,
to the best of our knowledge, there are only a few works
concerning both universal algebra and cryptography. Probably
the first such work is by Artamonov and
Yashchenko~\cite{AY94}. In that work, the authors introduced
and studied the notion of a pk-algebra. This notion
naturally formalizes the syntax of a two-message two-party
key agreement scheme. See also the extended
version~\cite{AKSY94} of~\cite{AY94}. Partala~\cite{Par18}
proposed a generalization of the well-known Diffie--Hellman
key agreement scheme based on universal algebras. Moreover,
he considered some approaches to the instantiation of the
proposed scheme. Loosely speaking, that scheme is secure if
it is computationally hard to compute images under an
unknown homomorphism (in a certain setting). See
also~\cite{Par11} (a preliminary version of~\cite{Par18})
and the thesis~\cite{Par15}.

Anokhin~\cite{Ano21} initiated the study of (weakly)
pseudo-free families of computational $\Omega$-algebras in
arbitrary varieties of $\Omega$-algebras. In our opinion,
the study of these families opens up new opportunities for
using (weak) pseudo-freeness in mathematical cryptography.
We briefly recall the main results of~\cite{Ano21}.

Let $\cl O$ denote the variety of all $\Omega$-algebras.
Then the following trichotomy holds:
\begin{roenum}
\item If $\Omega$ consists of nullary operation symbols
only, then unconditionally there exists a pseudo-free family
in~$\cl O$. This family consists of free $\Omega$-algebras.

\item If $\Omega=\Omega_0\cup\{\omega\}$, where $\Omega_0$
consists of nullary operation symbols and the arity of
$\omega$ is~$1$, then in $\cl O$, unconditionally there
exist an exponential-size pseudo-free family and a weakly
pseudo-free family. The former family has unique
representations of elements but is not polynomially bounded.

\item In all other cases, the existence of polynomially
bounded weakly pseudo-free families in $\cl O$ (not
necessarily having unique representations of elements)
implies the existence of collision-resistant hash function
families.
\end{roenum}

Assume that $\Omega$ contains a binary operation symbol
$\omega$ and $\cl V$ is a nontrivial variety of
$\Omega$-algebras such that any $\Omega$-algebra in $\cl V$
is a groupoid with an identity element under~$\omega$. (In
particular, this holds if $\cl V$ is a nontrivial variety of
monoids, loops, groups, or rings.) Then the existence of
polynomially bounded weakly pseudo-free families in $\cl V$
(not necessarily having unique representations of elements)
implies the existence of collision-resistant hash function
families. See~\cite[Section~4]{Ano21} for details.

Suppose $\Omega$ consists of a single $m$-ary operation
symbol, where $m\ge1$. In other words, we consider $m$-ary
groupoids. Furthermore, assume the existence of
collision-resistant hash function families. Then in $\cl O$,
there exist a weakly pseudo-free family and an
exponential-size pseudo-free family. The latter family is
not polynomially bounded and does not have unique
representations of elements. See~\cite[Section~5]{Ano21} for
details. As we have already seen, if $m=1$, then such
families (even an exponential-size pseudo-free family having
unique representations of elements) exist unconditionally.

In~\cite{Ano22}, Anokhin studied the connections between
pseudo-free families of computational $\Omega$-algebras (in
appropriate varieties of $\Omega$-algebras) and certain
standard cryptographic primitives. The main results of that
paper are as follows:
\begin{itemize}
\item Any $1$-pseudo-free (in particular, pseudo-free)
family of computational mono-unary algebras with one-to-one
fundamental operation (satisfying an additional condition)
in $\cl O$ naturally defines a one-way family of
permutations. Conversely, if there exists a one-way family
of permutations, then there exists a pseudo-free family of
computational mono-unary algebras in $\cl O$ with one-to-one
fundamental operation.

\item Let $m\in\{2,3,\dotsc\}$. Then any $1$-pseudo-free (in
particular, pseudo-free) family of computational $m$-unary
algebras with one-to-one fundamental operations (satisfying
an additional condition) in $\cl O$ naturally defines a
claw-resistant family of $m$-tuples of permutations.
Conversely, if there exists a claw-resistant family of
$m$-tuples of permutations, then there exists a pseudo-free
family of computational $m$-unary algebras in $\cl O$ with
one-to-one fundamental operations.

\item For a certain $\Omega$ and a certain variety $\cl V$
of $\Omega$-algebras, any $1$-pseudo-free (in particular,
pseudo-free) family of computational $\Omega$-algebras
(satisfying some additional conditions) in $\cl V$ naturally
defines a family of trapdoor permutations.
\end{itemize}
Recall that if $\Omega$ consists of a single unary operation
symbol (resp., of $m$ unary operation symbols), then
$\Omega$-algebras are called mono-unary (resp., $m$-unary)
algebras.

\subsection{Our Contribution and Organization of the Paper}

We note that the weak pseudo-freeness of all known
candidates for (weakly) pseudo-free families in nontrivial
varieties of groups can be broken (i.e., a
system~\eqref{e:syseq} of equations with $n=0$ can be found)
by efficient quantum algorithms. This raises the following
question: Does there exist (under a standard cryptographic
assumption) a post-quantum weakly pseudo-free family in some
nontrivial variety of groups? (The term ``post-quantum
weakly pseudo-free family'' means here that finding a
system~\eqref{e:syseq} with $n=0$ is computationally hard
even for quantum algorithms.) Recall that, unless otherwise
specified, families of computational $\Omega$-algebras are
assumed to be polynomially bounded and to have unique
representations of elements. Of course, all families of
computational $\Omega$-algebras (in particular, groups) in
the trivial variety are post-quantum pseudo-free in it. This
is because every system of equations of the
form~\eqref{e:syseq} is satisfiable in the $\cl V$-free
$\Omega$-algebra freely generated by $a_1,\dots,a_m$, which
is trivial if $m\ge1$. See~\cite[Remark~3.4]{Ano13} for
groups and~\cite[Remark~3.7]{Ano21} for $\Omega$-algebras.

In this paper, we also consider a worst-case version of weak
pseudo-freeness. In this version, loosely speaking, a member
$H$ of the family and elements $g_1,\dots,g_m\in H$ (see the
informal definition at the beginning of the paper) are
arbitrary rather than random. It is easy to see that weak
pseudo-freeness implies worst-case weak pseudo-freeness in
the same variety (see Remark~\ref{r:wpsfreeiswcwpsfree}).

Moreover, in addition to families of computational
$\Omega$-algebras, we consider families of black-box
$\Omega$-algebras. In the black-box $\Omega$-algebra model,
elements of a finite $\Omega$-algebra $H$ are represented
for computational purposes by bit strings of the same length
(depending on $H$) and the fundamental operations of $H$ are
performed by an oracle. This model was introduced by Babai
and Szemer\'edi~\cite{BS84} for groups. See
Subsection~\ref{ss:bbOamod} for details.

The above properties of families of computational
$\Omega$-algebras (pseudo-freeness, weak pseudo-freeness,
polynomial boundedness, etc.) can be defined for families of
black-box $\Omega$-algebras similarly. Like families of
computational $\Omega$-algebras, unless otherwise specified,
families of black-box $\Omega$-algebras are assumed to be
polynomially bounded and to have unique representations of
elements. Note that if there exists a weakly pseudo-free
family of computational $\Omega$-algebras, then there exists
a weakly pseudo-free family of black-box $\Omega$-algebras
in the same variety (see
Proposition~\ref{p:exwpffocOaimplexwpffobbOa}).

Let $\cl V$ be a variety of $\Omega$-algebras. For any
$\Psi\subseteq\Omega$, we denote by $\cl V|_\Psi$ the
variety of $\Psi$-algebras generated by the $\Psi$-reducts
of all $\Omega$-algebras in~$\cl V$. (The $\Psi$-reduct of
an $\Omega$-algebra $H$ is $H$ considered as a
$\Psi$-algebra.) Suppose $\Omega$ contains a set $\Gamma$ of
group operation symbols such that $\cl V|_\Gamma$ is a
variety of groups. (A set of group operation symbols
consists of a binary, a unary, and a nullary operation
symbols for the multiplication, the inversion, and the
identity element in a group, respectively.) In this case,
$\cl V$ is called a variety of expanded groups. Choose such
a set~$\Gamma$. Furthermore, we assume that $\cl V$ is
nontrivial and elements of the $\cl V$-free $\Omega$-algebra
freely generated by $a_1,a_2,\dotsc$ are represented by
straight-line programs (see Example~\ref{ex:SLPrepres}).
Then our main result (Theorem~\ref{t:tharenopqwpffoeg})
states that in $\cl V$, there are no families of any of the
following types:
\begin{itemize}
\item post-quantum weakly pseudo-free families of
computational $\Omega$-algebras,

\item post-quantum worst-case weakly pseudo-free families of
computational $\Omega$-algebras,

\item post-quantum weakly pseudo-free families of black-box
$\Omega$-algebras,

\item post-quantum worst-case weakly pseudo-free families of
black-box $\Omega$-algebras.
\end{itemize}
In particular, this is true for nontrivial varieties of
groups, rings, modules and algebras over a finitely
generated commutative associative ring with $1$, near-rings,
and, more generally, groups with finitely many multiple
operators (see Remark~\ref{r:apploftheorem}). Thus, we give
a negative answer to the above question. Note that if the
set $\Gamma$ cannot be chosen so that $\cl V|_\Gamma$ has
infinite exponent or is solvable, then our main result
depends on the classification of finite simple groups.

The outline of the proof of our main result is as follows.
First, it is sufficient to prove the nonexistence of
post-quantum worst-case weakly pseudo-free families of
black-box $\Omega$-algebras in $\cl V$ (see, e.g.,
Figure~\ref{f:relations}). Second, the results of
Subsection~\ref{ss:wpffofreducts} imply that for this it
suffices to prove the nonexistence of post-quantum
worst-case weakly pseudo-free families of black-box groups
in $\cl V|_\Gamma$. Third, for any family of black-box
groups in $\cl V|_\Gamma$, we construct a polynomial-time
black-box group quantum algorithm that breaks the worst-case
weak pseudo-freeness of this family. This algorithm is based
on a polynomial-time black-box group quantum algorithm for
one of the following problems:
\begin{roenum}
\item\label{i:finds} Given a black-box group $G\in\cl
V|_\Gamma$ and an element $g\in G$, find a positive integer
$s$ such that $g^s=1$ (if $\cl V|_\Gamma$ has infinite
exponent).

\item\label{i:findSLP} Given a black-box group $G\in\cl
V|_\Gamma$ and elements $g_1,\dots,g_m,h\in G$ such that $h$
is in the subgroup generated by $g_1,\dots,g_m$, find a
straight-line program computing $h$ from $g_1,\dots,g_m$ (if
$\cl V|_\Gamma$ has finite exponent).
\end{roenum}
Such algorithms for these problems do exist. Indeed,
problem~\ref{i:finds} can be solved in quantum polynomial
time by Shor's order-finding algorithm
(see~\cite[Section~5]{Sho97}, \cite[Subsection~5.3.1]{NC10},
or~\cite[Subsections~13.4--13.6]{KSV02}) modified for the
black-box group model. Problem~\ref{i:findSLP} can be solved
by the black-box group quantum algorithm of Ivanyos,
Magniez, and Santha for the constructive membership problem
(see~\cite[Theorem~5]{IMS03}). If $\cl V|_\Gamma$ is not the
variety of all groups (in particular, if $\cl V|_\Gamma$ has
finite exponent), then that algorithm runs in polynomial
time whenever the given black-box group is in~$\cl
V|_\Gamma$. This follows from a result of Jones~\cite{Jon74}
together with the classification of finite simple groups.
See Remark~\ref{r:ptqalgforCMP} for details.

For a positive integer $e$, $\cl A_e$ denotes the variety of
all abelian groups $G$ such that $g^e=1$ for all $g\in G$.
We note that if $\cl V|_\Gamma=\cl A_e$, where $e\ge2$, then
the third step of the proof of our main result can be done
using a polynomial-time quantum algorithm for the hidden
subgroup problem for the $\cl A_e$-free group generated by
$a_1,\dots,a_m$ ($m\ge1$). (This group is the direct product
of the cyclic subgroups generated by $a_1,\dots,a_m$; each
of these subgroups has order~$e$.) Such an algorithm exists,
e.g., by~\cite[Theorem~3.13]{Lom04}. We
recommend~\cite{Lom04} as a good source of information on
quantum algorithms for the hidden subgroup problem.

The rest of the paper is organized as follows.
Section~\ref{s:prelim} contains notation, basic definitions,
and general results used in the paper. In
Section~\ref{s:wpsfreefams}, we define and discuss some
types of weak pseudo-freeness (including post-quantum ones)
for families of computational and black-box
$\Omega$-algebras. Relations between these types are studied
in Subsection~\ref{ss:relations} and depicted in
Figure~\ref{f:relations}. In our opinion, some of these
types of weak pseudo-freeness might be interesting for
future research. Also, we want to state our main result in
the strongest possible form. This is another motivation for
introducing new types of weak pseudo-freeness. In
Subsection~\ref{ss:wpffofreducts}, loosely speaking, we show
that if $\Psi\subseteq\Omega$, then the family of
$\Psi$-reducts of $\Omega$-algebras in a weakly pseudo-free
family in $\cl V$ is weakly pseudo-free in~$\cl V|_\Psi$.
The purpose of Section~\ref{s:ptbbgrqalgs} is to prove the
existence of polynomial-time black-box group quantum
algorithms that are used in the proof of our main result.
These algorithms are constructed in the proofs of
Lemmas~\ref{l:ProutBinLambdainfexp}
and~\ref{l:ProutBinLambdantrnall}. In
Section~\ref{s:mainres}, we prove the main result of this
paper (Theorem~\ref{t:tharenopqwpffoeg}).
Section~\ref{s:conclanddirforfutres} concludes and suggests
some directions for future research. Finally, in
Appendix~\ref{a:tabofnot}, we briefly recall some notation
introduced in Sections~\ref{s:prelim}
and~\ref{s:wpsfreefams}.

\section{Preliminaries}
\label{s:prelim}

\subsection{General Preliminaries}

In this paper, $\N$ denotes the set of all nonnegative
integers. Let $Y$ be a set and let $n\in\N$. We denote by
$Y^n$ the set of all (ordered) $n$-tuples of elements
from~$Y$. Of course, $Y^1$ is identified with~$Y$.
Furthermore, we put $Y^{\le n}=\bigcup_{i=0}^nY^i$ and
$Y^*=\bigcup_{i=0}^\infty Y^i$. In particular, $\emptyset^*$
consists only of the empty tuple.

We consider elements of $\B^*$ as bit strings and denote the
length of a string $u\in\B^*$ by~$\acl u$. The unary
representation of $n$, i.e., the string of $n$ ones, is
denoted by~$1^n$. Similarly, $0^n$ is the string of $n$
zeros. As usual, $\oplus$ denotes the bitwise XOR operation.

Let $I$ be a set. Suppose each $i\in I$ is assigned an
object~$q_i$. Then we denote by $(q_i\st i\in I)$ the family
of all these objects, whereas $\{q_i\st i\in I\}$ denotes
the set of all elements of this family.

When necessary, we assume that all ``finite'' objects (e.g.,
integers, tuples of integers, tuples of tuples of integers)
are represented by bit strings in some natural way.
Sometimes we identify such objects with their
representations. Unless otherwise specified, integers are
represented by their binary expansions.

Suppose $\phi$ is a function. We denote by $\dom\phi$ the
domain of~$\phi$. Also, we use the same notation for $\phi$
and for the function
$(z_1,\dots,z_n)\mapsto(\phi(z_1),\dots,\phi(z_n))$, where
$n\in\N$ and $z_1,\dots,z_n\in\dom\phi$. The identity
function on the set $Y$ is denoted by~$\id_Y$.

Let $\rho$ be a function from a subset of $\B^*$ onto a set
$T$ and let $t\in T$. Then $\rep t\rho$ denotes an arbitrary
preimage of $t$ under~$\rho$. We use $\rep t\rho$ as a
representation of $t$ for computational purposes. A similar
notation was used by Boneh and Lipton in~\cite{BL96} and by
Hohenberger in~\cite{Hoh03}.

For convenience, we say that a function
$\pi\colon\N\to\N\setminus\{0\}$ is a \emph{polynomial} if
there exist $c\in\N\setminus\{0\}$ and $d\in\N$ such that
$\pi(n)=cn^d$ for any $n\in\N\setminus\{0\}$ ($\pi(0)$ can
be an arbitrary positive integer). Of course, every
polynomial growth function from $\N$ to $\R_+=\{r\in\R\st
r\ge0\}$ can be upper bounded by a polynomial in this sense.
Therefore this notion of a polynomial is sufficient for our
purposes.

\subsection{Universal-Algebraic Preliminaries}
\label{ss:univalgprelim}

In this subsection, we recall the basic definitions and
simple facts from universal algebra. For a detailed
introduction to this topic, the reader is referred to
standard books, e.g.,~\cite{Cohn81, BS12, Wech92}.

Throughout the paper, $\Omega$ denotes a set of finitary
operation symbols. Moreover, in all sections except this
one, we assume that $\Omega$ is finite. Each
$\omega\in\Omega$ is assigned a nonnegative integer called
the \emph{arity} of $\omega$ and denoted by~$\ar\omega$. An
\emph{$\Omega$-algebra} is a set $H$ called the
\emph{carrier} (or the \emph{underlying set}) together with
a family $(\widehat\omega\colon H^{\ar\omega}\to
H\st\omega\in\Omega)$ of operations on $H$ called the
\emph{fundamental operations}. For simplicity of notation,
the fundamental operation $\widehat\omega$ associated with a
symbol $\omega\in\Omega$ will be denoted by~$\omega$.
Furthermore, we often denote an $\Omega$-algebra and its
carrier by the same symbol.

Let $H$ be an $\Omega$-algebra. A subset of $H$ is called a
\emph{subalgebra} of $H$ if it is closed under the
fundamental operations of~$H$. If $S$ is a system of
elements of $H$, then we denote by $\alg S$ the subalgebra
of $H$ generated by $S$, i.e., the smallest subalgebra of
$H$ containing~$S$.

Suppose $G$ is an $\Omega$-algebra. A \emph{homomorphism} of
$G$ to $H$ is a function $\phi\colon G\to H$ such that for
every $\omega\in\Omega$ and $g_1,\dots,g_{\ar\omega}\in G$,
\[
\phi(\omega(g_1,\dots,g_{\ar\omega}))
=\omega(\phi(g_1),\dots,\phi(g_{\ar\omega})).
\]
If a homomorphism of $G$ onto $H$ is one-to-one, then it is
called an \emph{isomorphism}. Of course, the
$\Omega$-algebras $G$ and $H$ are said to be
\emph{isomorphic} if there exists an isomorphism of $G$
onto~$H$.

Let $(H_i\st i\in I)$ be a family of $\Omega$-algebras. The
fundamental operations of the \emph{direct product} of this
family are defined as follows:
\[
\omega((h_{1,i}\st i\in I),\dots,(h_{\ar\omega,i}\st i\in
I))=(\omega(h_{1,i},\dots,h_{\ar\omega,i})\st i\in I),
\]
where $\omega\in\Omega$ and
$h_{1,i},\dots,h_{\ar\omega,i}\in H_i$ for all $i\in I$.

An $\Omega$-algebra with only one element is said to be
\emph{trivial}. It is obvious that all trivial
$\Omega$-algebras are isomorphic.

For every $n\in\N$, put
$\Omega_n=\{\omega\in\Omega\st\ar\omega=n\}$. We note that
if $\Omega_0=\emptyset$, then an $\Omega$-algebra may be
empty. Whenever $\omega\in\Omega_0$, it is common to write
$\omega$ instead of~$\omega()$.

Let $Z$ be a set of objects called variables. We always
assume that any variable is not in~$\Omega$. The set $\Tm Z$
of all \emph{$\Omega$-terms} (or simply \emph{terms}) over
$Z$ is defined as the smallest set such that $\Omega_0\cup
Z\subseteq\Tm Z$ and if $\omega\in\Omega\setminus\Omega_0$
and $v_1,\dots,v_{\ar\omega}\in\Tm Z$, then the formal
expression $\omega(v_1,\dots,v_{\ar\omega})$ is in~$\Tm Z$.
Of course, $\Tm Z$ is an $\Omega$-algebra under the natural
fundamental operations. This $\Omega$-algebra is called the
\emph{$\Omega$-term algebra} over~$Z$.

Consider the case when $Z=\{z_1,z_2,\dotsc\}$, where
$z_1,z_2,\dotsc$ are distinct. Let $m\in\N$. We denote by
$T_\infty$ and $T_m$ the $\Omega$-term algebras
$\Tm\{z_1,z_2,\dotsc\}$ and $\Tm\{z_1,\dots,z_m\}$,
respectively. Suppose $h=(h_1,\dots,h_m,\dotsc)$ is either
an $m'$-tuple, where $m'\ge m$, or an infinite sequence of
elements of~$H$. Then for every $v\in T_m$, the element
$v(h)=v(h_1,\dots,h_m)\in H$ is defined inductively in the
natural way. It is easy to see that $\{v(h_1,\dots,h_m)\st
v\in T_m\}=\alg{h_1,\dots,h_m}$. If $h$ is an infinite
sequence, then $\{v(h)\st v\in
T_\infty\}=\alg{h_1,h_2,\dotsc}$.

An \emph{identity} (or a \emph{law}) over $\Omega$ is a
closed first-order formula of the form
$\Forall{z_1,\dots,z_m}(v=w)$, where $m\in\N$ and $v,w\in
T_m$. Usually we will omit the phrase ``over $\Omega$.'' We
will write identities simply as $v=w$, where $v,w\in
T_\infty$, assuming that all variables are universally
quantified. A class $\cl V$ of $\Omega$-algebras is said to
be a \emph{variety} if it can be defined by a set $\Upsilon$
of identities. This means that for any $\Omega$-algebra $G$,
$G\in\cl V$ if and only if $G$ satisfies all identities
in~$\Upsilon$. By the famous Birkhoff variety theorem (see,
e.g.,~\cite[Chapter~IV, Theorem~3.1]{Cohn81},
\cite[Chapter~II, Theorem~11.9]{BS12},
or~\cite[Subsection~3.2.3, Theorem~21]{Wech92}), a class of
$\Omega$-algebras is a variety if and only if it is closed
under taking subalgebras, homomorphic images, and direct
products. Note that if a class of $\Omega$-algebras is
closed under taking direct products, then it contains a
trivial $\Omega$-algebra as the direct product of the empty
family of $\Omega$-algebras.

The variety consisting of all $\Omega$-algebras with at most
one element is said to be \emph{trivial}; all other
varieties of $\Omega$-algebras are called \emph{nontrivial}.
The trivial variety is defined by the identity $z_1=z_2$.
When $\Omega_0=\emptyset$, the trivial variety contains not
only trivial $\Omega$-algebras, but also the empty
$\Omega$-algebra. If $\cl C$ is a class of
$\Omega$-algebras, then the variety \emph{generated} by $\cl
C$ is the smallest variety of $\Omega$-algebras
containing~$\cl C$. This variety is defined by the set of
all identities holding in all $\Omega$-algebras in~$\cl C$.

Throughout the paper, $\cl V$ denotes a variety of
$\Omega$-algebras. An $\Omega$-algebra $F\in\cl V$ is said
to be \emph{$\cl V$-free} if it has a generating system
$(f_i\st i\in I)$ such that for every system of elements
$(g_i\st i\in I)$ of any $\Omega$-algebra $G\in\cl V$ there
exists a homomorphism $\alpha\colon F\to G$ satisfying
$\alpha(f_i)=g_i$ for all $i\in I$ (evidently, this
homomorphism $\alpha$ is unique). Any generating system
$(f_i\st i\in I)$ with this property is called \emph{free}
and the $\Omega$-algebra $F$ is said to be \emph{freely
generated} by every such system. The next lemma is well
known and can be proved straightforwardly.

\begin{lemma}
\label{l:freegensys}
Suppose $F$ is an $\Omega$-algebra in $\cl V$ and $(f_i\st
i\in I)$ is a generating system of~$F$. Then $F$ is a $\cl
V$-free $\Omega$-algebra freely generated by $(f_i\st i\in
I)$ if and only if for any $m\in\N$ and any $v,w\in T_m$,
the identity $v=w$ holds in $\cl V$ whenever
$v(f_{i_1},\dots,f_{i_m})=w(f_{i_1},\dots,f_{i_m})$ for some
distinct $i_1,\dots,i_m\in I$.
\end{lemma}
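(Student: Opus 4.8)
The plan is to prove both directions of the equivalence directly from the universal property that defines a $\cl V$-free $\Omega$-algebra, using as the one mildly non-routine ingredient the standard fact --- itself proved by induction on the structure of terms --- that every homomorphism $\phi\colon G\to H$ of $\Omega$-algebras commutes with term operations: $\phi(t(g_1,\dots,g_m))=t(\phi(g_1),\dots,\phi(g_m))$ for all $t\in T_m$ and $g_1,\dots,g_m\in G$. I will also use that, since $(f_i\st i\in I)$ generates $F$, every element of $F$ has the form $t(f_{i_1},\dots,f_{i_m})$ for some $m\in\N$, some distinct $i_1,\dots,i_m\in I$, and some $t\in T_m$.

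For the ``only if'' direction I would assume that $F$ is $\cl V$-free freely generated by $(f_i\st i\in I)$ and take $v,w\in T_m$ and distinct $i_1,\dots,i_m\in I$ with $v(f_{i_1},\dots,f_{i_m})=w(f_{i_1},\dots,f_{i_m})$ in $F$. To see that the identity $v=w$ holds in $\cl V$, I pick an arbitrary $G\in\cl V$; the identity holds vacuously when $G=\emptyset$, so assume $G\ne\emptyset$, take any $g_1,\dots,g_m\in G$, extend them to a family $(g'_i\st i\in I)$ of elements of $G$ with $g'_{i_j}=g_j$ for $j\le m$, use freeness to get a homomorphism $\alpha\colon F\to G$ with $\alpha(f_i)=g'_i$, and apply $\alpha$ to both sides of the assumed equality; the term-operation fact then gives $v(g_1,\dots,g_m)=w(g_1,\dots,g_m)$.

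For the ``if'' direction I would assume the stated condition and, given $G\in\cl V$ and a family $(g_i\st i\in I)$ of elements of $G$, build the homomorphism $\alpha\colon F\to G$ with $\alpha(f_i)=g_i$ (its uniqueness is automatic, since $(f_i\st i\in I)$ generates $F$). Writing each $a\in F$ as $a=t(f_{i_1},\dots,f_{i_m})$ as above, I set $\alpha(a)=t(g_{i_1},\dots,g_{i_m})$; the crux is that this is well defined. If $t(f_{i_1},\dots,f_{i_m})=s(f_{j_1},\dots,f_{j_k})$ in $F$, I enumerate $\{i_1,\dots,i_m\}\cup\{j_1,\dots,j_k\}$ without repetition as $l_1,\dots,l_p$, re-express $t$ and $s$ as terms $v,w\in T_p$ in the variables matching $l_1,\dots,l_p$, so that $v(f_{l_1},\dots,f_{l_p})=w(f_{l_1},\dots,f_{l_p})$ in $F$; the hypothesis then yields that $v=w$ holds in $\cl V$, hence in $G$, whence $t(g_{i_1},\dots,g_{i_m})=v(g_{l_1},\dots,g_{l_p})=w(g_{l_1},\dots,g_{l_p})=s(g_{j_1},\dots,g_{j_k})$. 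The same ``pass to a common index tuple'' device shows that $\alpha$ respects every $\omega\in\Omega$, and $\alpha(f_i)=g_i$ follows by taking $t=z_1$, so $F$ is $\cl V$-free freely generated by $(f_i\st i\in I)$.

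The step I expect to cost the most attention is not conceptually deep: it is the bookkeeping in the ``if'' direction --- merging two finite tuples of indices into one and rewriting the terms over it, so that the hypothesis, which is phrased for a single tuple $i_1,\dots,i_m$, applies --- together with checking the degenerate cases ($I=\emptyset$, $\Omega_0=\emptyset$, empty carriers). A slicker, though less self-contained, route is to present $F$ as a quotient $\Tm\{z_i\st i\in I\}/\theta$ of a term algebra: the displayed condition says exactly that $\theta$ is contained in the fully invariant congruence cut out by the identities valid in $\cl V$, while $F\in\cl V$ forces the reverse inclusion, and equality of the two congruences is precisely what it means for $F$ to be $\cl V$-free on $(f_i\st i\in I)$. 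I would mention this as the underlying reason but carry out the elementary argument above.
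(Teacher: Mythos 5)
Your proof is correct. The paper itself gives no argument for this lemma (it is dismissed as ``well known and/or can be proved straightforwardly''), and what you have written out --- the universal property plus the commutation of homomorphisms with term operations for the ``only if'' direction, and the well-definedness of $\alpha$ via merging index tuples for the ``if'' direction --- is exactly the standard argument the author is alluding to, including the correct handling of the degenerate cases ($G=\emptyset$, which only arises when $\Omega_0=\emptyset$, and the use of distinctness of $i_1,\dots,i_m$ when extending $g_1,\dots,g_m$ to a family indexed by all of $I$). No gaps.
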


It is well known (see, e.g.,~\cite[Chapter~IV,
Corollary~3.3]{Cohn81}, \cite[Chapter~II, Definition~10.9
and Theorem~10.10]{BS12}, or~\cite[Subsection~3.2.3,
Theorem~16]{Wech92}) that for any set $I$ there exists a
unique (up to isomorphism) $\cl V$-free $\Omega$-algebra
with a free generating system indexed by~$I$. It is easy to
see that if $\cl V$ is nontrivial, then for every free
generating system $(f_i\st i\in I)$ of a $\cl V$-free
$\Omega$-algebra, $f_i$ are distinct. In this case, one may
consider free generating systems as sets.

We denote by $F_\infty(\cl V)$ the $\cl V$-free
$\Omega$-algebra freely generated by $a_1,a_2,\dotsc$. Of
course, if $\cl V$ is nontrivial, then $a_1,a_2,\dotsc$ are
assumed to be distinct. Furthermore, suppose $m\in\N$ and
let $F_m(\cl V)=\alg{a_1,\dots,a_m}$. For elements of
$F_m(\cl V)$, we use the notation $v(a)=v(a_1,\dots,a_m)$,
where $v\in T_m$. It is well known that $a_1,a_2,\dotsc$ can
be considered as variables taking values in arbitrary
$\Omega$-algebra $G\in\cl V$. That is, for any $v(a)\in
F_m(\cl V)$ and any $g=(g_1,\dots,g_m)\in G^m$, the element
$v(g)=v(g_1,\dots,g_m)\in G$ is well defined as
$\alpha(v(a))$, where $\alpha$ is the unique homomorphism of
$F_m(\cl V)$ to $G$ such that $\alpha(a_i)=g_i$ for all
$i\in\{1,\dots,m\}$.

By a \emph{straight-line program} over $\Omega$ we mean a
nonempty sequence $(u_1,\dots,u_n)$ such that for every
$i\in\{1,\dots,n\}$, either $u_i\in\N\setminus\{0\}$ or
$u_i=(\omega,j_1,\dots,j_{\ar\omega})$, where
$\omega\in\Omega$ and
$j_1,\dots,j_{\ar\omega}\in\{1,\dots,i-1\}$. These two cases
should be clearly distinguished. Usually we will omit the
phrase ``over $\Omega$.'' Suppose $g_1,\dots,g_m\in H$,
where $m\in\N$. Furthermore, let $u=(u_1,\dots,u_n)$ be a
straight-line program such that if $u_i\in\N\setminus\{0\}$,
then $u_i\le m$. Then $u$ naturally defines the sequence
$(h_1,\dots,h_n)$ of elements of $H$ by induction. Namely,
for each $i\in\{1,\dots,n\}$, we put $h_i=g_{u_i}$ if
$u_i\in\N\setminus\{0\}$ and
$h_i=\omega(h_{j_1},\dots,h_{j_{\ar\omega}})$ if
$u_i=(\omega,j_1,\dots,j_{\ar\omega})$, where $\omega$ and
$j_1,\dots,j_{\ar\omega}$ are as above. We say that $u$
\emph{computes} the element $h_n$ from $g_1,\dots,g_m$. The
positive integer $n$ is called the \emph{length} of the
straight-line program~$u$. It is easy to see that an element
$h\in H$ can be computed from $g_1,\dots,g_m$ by a
straight-line program if and only if
$h\in\alg{g_1,\dots,g_m}$.

\subsection{Group-Theoretic Preliminaries}

In this subsection, we recall some definitions and facts
from group theory. For a detailed introduction to this
topic, the reader is referred to standard textbooks,
e.g.,~\cite{KM79, Rob96, Rot95}.

We say that $\Omega$ is a \emph{set of group operation
symbols} if it consists of a binary, a unary, and a nullary
operation symbols (for the multiplication, the inversion,
and the identity element in a group, respectively). We
consider groups as $\Omega$-algebras, where $\Omega$ is a
set of group operation symbols. Therefore the content of
Subsection~\ref{ss:univalgprelim} is applicable to groups.
Of course, we use the standard group-theoretic notation,
e.g., $gh$, $g^n$, and $1$, where $g$ and $h$ are elements
of a group and $n$ is an integer.

The abbreviation CFSG stands for the Classification of
Finite Simple Groups. This classification states that every
finite simple group is isomorphic to
\begin{itemize}
\item a cyclic group of prime order,

\item an alternating group of degree at least~$5$,

\item a finite simple group of Lie type, or

\item one of the $26$~sporadic finite simple groups.
\end{itemize}
See~\cite[Chapter~2]{Gor82} or~\cite[Part~I, Chapter~1,
Section~1]{GLS94} for details.

Let $G$ be a group. The notation $H\pns G$ means that $H$ is
a proper normal subgroup of~$G$. A subnormal series
\[
\{1\}=G_0\pns G_1\pns\dots\pns G_n=G
\]
is said to be a \emph{composition series} of the group $G$
if all factors $G_i/G_{i-1}$ ($i\in\{1,\dots,n\}$) of this
series are simple groups. Of course, not every group has a
composition series. However, any finite group certainly has
one. By the well-known Jordan--H\"older theorem, the factors
of a composition series of $G$ do not depend on the series
(up to isomorphism and permutation of factors); these
factors are called the \emph{composition factors} of~$G$.
See~\cite[Chapter~5, Section ``The Jordan--H\"older
Theorem'']{Rot95}, \cite[Section~3.1]{Rob96},
or~\cite[Subsection~4.4]{KM79}. For finite groups, see also
\cite[Section~1.1, Definition~D6]{Gor82} or~\cite[Part~I,
Chapter~1, Section~3]{GLS94}. It is well known and easy to
see that a finite group is solvable if and only if all its
composition factors are abelian (or, equivalently, cyclic of
prime order).

The next lemma, due to Babai and Szemer\'edi, is known as
the Reachability Theorem or the Reachability Lemma
(see~\cite[Theorem~3.1]{BS84} or~\cite[Lemma~6.4]{Bab91}).

\begin{lemma}
\label{l:shortSLP}
Suppose $G$ is a finite group. Let $g_1,\dots,g_m$ (where
$m\in\N$) be a generating system of~$G$. Then any element of
$G$ can be computed from $g_1,\dots,g_m$ by a straight-line
program of length at most $(1+\log_2\acl G)^2$.
\end{lemma}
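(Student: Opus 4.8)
The plan is to build, step by step, a chain of subsets of $G$ whose sizes at least double at each stage, and to show that every element reached so far admits a short straight-line program. Set $S_0=\{g_1,\dots,g_m,g_1^{-1},\dots,g_m^{-1},1\}$ — the generators, their inverses, and the identity — and observe that each element of $S_0$ has a straight-line program of length at most $3$ (apply the nullary identity symbol, or a generator followed by the unary inversion symbol). Now define $S_{k+1}$ from $S_k$ as follows: if $\langle S_k\rangle\ne G$, pick any $x\in S_k\cdot S_k\setminus\langle S_k\rangle$ (such $x$ exists because $S_k$ generates $G$, so the product set cannot stay inside a proper subgroup), and put $S_{k+1}=S_k\cup xS_k$; otherwise, if $\langle S_k\rangle=G$ but $S_k\ne G$, pick $x\in S_k\cdot S_k\setminus S_k$ and do the same; if $S_k=G$, stop. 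The key size estimate is $\lvert S_{k+1}\rvert\ge 2\lvert S_k\rvert$ in the first two cases: the cosets $S_k$ and $xS_k$ are disjoint because $x\notin\langle S_k\rangle\supseteq S_kS_k^{-1}$ in the first case, and one checks a similar disjointness in the second case using $x\notin S_k$ together with $1\in S_k$ (this is where including $1$ in $S_0$ is convenient). Since $\lvert S_0\rvert\ge 1$, after at most $\log_2\lvert G\rvert$ doublings we have $S_k=G$.

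The next ingredient is a bound on straight-line program length as the construction proceeds. I would prove by induction on $k$ that every element of $S_k$ can be computed from $g_1,\dots,g_m$ by a straight-line program whose length is at most some function $\ell(k)$, and that these programs can be taken to share a common prefix — equivalently, one maintains a single straight-line program one of whose intermediate results is the chosen $x$ at each stage, and whose later lines record, for each $s\in S_k$, a line computing $xs$ from the (already present) line for $x$ and the line for $s$. Concretely: the program for stage $0$ lists the $\le 3m+1$ lines producing $S_0$; passing from stage $k$ to stage $k+1$ appends one line computing $x$ as a product of two earlier elements of $S_k$ (both already in the program), then $\lvert S_k\rvert$ further lines, one per $s\in S_k$, computing $xs$ via the binary multiplication symbol. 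Thus $\ell(k+1)\le\ell(k)+1+\lvert S_k\rvert$, and since the $\lvert S_k\rvert$ form a geometrically-growing sequence bounded by $\lvert G\rvert$, the total length after the final stage is dominated by a constant times $\lvert G\rvert$ — which is far worse than the claimed $(1+\log_2\lvert G\rvert)^2$.

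So the crude "add the whole coset" step is too expensive, and the real content of the lemma is a more frugal doubling. I would instead use the standard cube-generating / reachability trick: maintain not a large set $S_k$ but a short list $(x_1,\dots,x_k)$ of group elements, each equipped with a short straight-line program (of length $O(k)$, since $x_{j}$ is built as a product of earlier $x_i$'s and the generators), and track the set of "reachable" elements as the set of all subproducts $x_1^{\varepsilon_1}\cdots x_k^{\varepsilon_k}$ with $\varepsilon_i\in\{0,1\}$ — a set of size up to $2^k$. At each stage, if this reachability set $R_k$ is not all of $G$, one shows (again because the generators lie in $R_k$ — or in $R_k R_k^{-1}$ — so $R_k R_k^{-1}$ cannot be contained in a proper subgroup) that there exist $y,z\in R_k$ with $yz^{-1}\notin R_k$; set $x_{k+1}=yz^{-1}$, which has a straight-line program of length at most $\text{(length of program for }y)+\text{(length for }z)+2=O(k)$ built from the existing lines plus one inversion and one multiplication. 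One checks $\lvert R_{k+1}\rvert\ge 2\lvert R_k\rvert$ as before, so the process halts after $k\le\log_2\lvert G\rvert$ stages with $R_k\ni$ every element of $G$; note $k\le\log_2\lvert G\rvert< 1+\log_2\lvert G\rvert$. Finally, an arbitrary $h\in G$ equals some subproduct $x_1^{\varepsilon_1}\cdots x_k^{\varepsilon_k}$; concatenating the programs for the $x_i$'s with $\varepsilon_i=1$ (there are at most $k$ of them, each of length $O(k)$, and they can share lines) and appending at most $k-1$ multiplication lines yields a straight-line program for $h$ of length $O(k^2)\le(1+\log_2\lvert G\rvert)^2$ after one tightens the constants by being careful about which lines are reused. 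The main obstacle is exactly this bookkeeping: making the per-stage program growth genuinely $O(k)$ rather than $O(\lvert R_k\rvert)$, and then showing the final assembly fits under the clean bound $(1+\log_2\lvert G\rvert)^2$ rather than merely $O((\log\lvert G\rvert)^2)$ — this requires choosing $y,z$ to be among a small distinguished subset of $R_k$ (e.g. the $x_i$'s together with $1$) and arguing the disjointness of the doubling with that restriction, which is the technical heart of the Babai–Szemer\'edi argument.
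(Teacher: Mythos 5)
First, note that the paper does not prove Lemma~\ref{l:shortSLP} at all: it is quoted from Babai and Szemer\'edi with citations to~\cite{BS84, Bab91}, so there is no in-paper proof to compare against. Your first plan is rightly discarded, and your second plan is indeed the standard cube-doubling argument from those sources; however, as written it has two concrete errors at its key step. (i)~\emph{Doubling criterion.} With $R_k=\{x_1^{\varepsilon_1}\cdots x_k^{\varepsilon_k}\st\varepsilon_i\in\{0,1\}\}$ you have $R_{k+1}=R_k\cup R_kx_{k+1}$, and these two sets are disjoint if and only if $x_{k+1}\notin R_k^{-1}R_k$; your condition $x_{k+1}=yz^{-1}\notin R_k$ does not give this. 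Take $G=\mathbb{Z}_3$ (written additively) and $R_k=\{0,1\}$: then $2=0-1\notin R_k$, yet $R_k\cup(R_k+2)=G$ has size $3<2\lvert R_k\rvert$. The correct move is to work with the difference set throughout: if $R_k^{-1}R_k\ne G$, then this finite set, which contains $1$, cannot be closed under right multiplication by every generator (otherwise it would equal $\alg{g_1,\dots,g_m}=G$), so some $x_{k+1}=yg_j$ with $y\in R_k^{-1}R_k$ lies outside $R_k^{-1}R_k$, and then $\lvert R_{k+1}\rvert=2\lvert R_k\rvert$. Your ``cannot be contained in a proper subgroup'' phrasing is not the operative mechanism, since $R_kR_k^{-1}$ need not be a subgroup and containment in a subgroup is not what fails.

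(ii)~\emph{Halting condition and final assembly.} The same example shows you cannot in general drive the process to $R_k=G$: it halts when $R_k^{-1}R_k=G$ while $R_k$ may still be a proper subset, so the target element $h$ must be written as $y^{-1}z$ with $y,z\in R_k$ --- a quotient of two subproducts --- rather than as a single subproduct. With these repairs the accounting goes through: each $x_{i+1}=yg_j$ costs about $2i+O(1)$ new lines (two subproducts of the already-computed $x_1,\dots,x_i$, one inversion, two multiplications), the number of stages is at most $\log_2\acl G$ since $\lvert R_k\rvert\ge2^k$, and the final $y^{-1}z$ costs another roughly $2k$ lines, for a total of $k^2+O(k)$. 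Landing exactly under $(1+\log_2\acl G)^2$ is the constant bookkeeping you flagged as unfinished; it is carried out in~\cite[Lemma~6.4]{Bab91}.
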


Suppose $\cl W$ is a variety of groups. Assume that there
exists a positive integer $n$ such that every group in $\cl
W$ satisfies the identity $z_1^n=1$. Then the smallest such
positive integer is called the \emph{exponent} of the
variety~$\cl W$. Otherwise the \emph{exponent} of $\cl W$ is
said to be infinite. In the latter case, some authors say
that $\cl W$ is of exponent zero (see, e.g.,~\cite{Neu67}).
It is easy to see that the exponent of $\cl W$ (finite or
infinite) coincides with~$\acl{F_1(\cl W)}$.

The variety $\cl W$ is called \emph{solvable} if it consists
of solvable groups. It is evident that the derived length of
groups in any solvable variety is upper bounded by a
nonnegative integer depending on the variety.

\subsection{Probabilistic Preliminaries}

Let $\pd Y$ be a probability distribution on a finite or
countably infinite sample space~$Y$. Then we denote by
$\supp\pd Y$ the \emph{support} of $\pd Y$, i.e., the set
$\{y\in Y\st\Pro_{\pd Y}\{y\}\ne0\}$. In many cases, one can
consider $\pd Y$ as a distribution on~$\supp\pd Y$.

Suppose $Z$ is a finite or countably infinite set and
$\alpha$ is a function from $Y$ to~$Z$. Then the image of
$\pd Y$ under $\alpha$, which is a probability distribution
on $Z$, is denoted by $\alpha(\pd Y)$. This distribution is
defined by $\Pro_{\alpha(\pd Y)}\{z\}=\Pro_{\pd
Y}\alpha^{-1}(z)$ for each $z\in Z$. Note that if a random
variable $\rv y$ is distributed according to $\pd Y$, then
the random variable $\alpha(\rv y)$ is distributed according
to~$\alpha(\pd Y)$.

We use the notation $\rv y_1,\dots,\rv y_n\rvd\pd Y$ to
indicate that $\rv y_1,\dots,\rv y_n$ (denoted by upright
bold letters) are independent random variables distributed
according to~$\pd Y$. We assume that these random variables
are independent of all other random variables defined in
such a way. Furthermore, all occurrences of an upright bold
letter in a probabilistic statement refer to the same
(unique) random variable. Of course, all random variables in
a probabilistic statement are assumed to be defined on the
same sample space. Other specifics of random variables do
not matter for us. Note that the probability distribution
$\pd Y$ in this notation may be random. For example, let
$(\pd Y_i\st i\in I)$ be a probability ensemble consisting
of distributions on the set $Y$, where the set $I$ is finite
or countably infinite. Moreover, suppose $\pd I$ is a
probability distribution on~$I$. Then $\rv i\rvd\pd I$ and
$\rv y\rvd\pd Y_{\rv i}$ mean that the joint distribution of
the random variables $\rv i$ and $\rv y$ is given by
$\Prob{\rv i=i\cs\rv y=y}=\Pro_{\pd I}\{i\}\Pro_{\pd
Y_i}\{y\}$ for each $i\in I$ and $y\in Y$.

For any $n\in\N$, we denote by $\pd Y^n$ the distribution of
a random variable $(\rv y_1,\dots,\rv y_n)$, where $\rv
y_1,\dots,\rv y_n\rvd\pd Y$. (Of course, the distribution of
this random variable does not depend on the choice of
independent random variables $\rv y_1,\dots,\rv y_n$
distributed according to~$\pd Y$.) It is easy to see that
$\alpha(\pd Y^n)=(\alpha(\pd Y))^n$ for every $\alpha\colon
Y\to Z$ and $n\in\N$.

\subsection{Cryptographic Preliminaries}

Let $\pd P=(\pd P_i\st i\in I)$ be a probability ensemble
consisting of distributions on~$\B^*$, where
$I\subseteq\B^*$. Then $\pd P$ is called
\emph{polynomial-time samplable} (or \emph{polynomial-time
constructible}) if there exists a probabilistic
polynomial-time algorithm $A$ such that for every $i\in I$
the random variable $A(i)$ is distributed according to~$\pd
P_i$. It is easy to see that if $\pd P$ is polynomial-time
samplable, then there exists a polynomial $\pi$ satisfying
$\supp\pd P_i\subseteq\B^{\le\pi(\acl i)}$ for any $i\in I$.
Furthermore, let $\pd Q=(\pd Q_j\st j\in J)$ be a
probability ensemble consisting of distributions on~$\B^*$,
where $J\subseteq\N$. Usually, when it comes to
polynomial-time samplability of $\pd Q$, the indices are
assumed to be represented in binary. If, however, these
indices are represented in unary, then we specify this
explicitly. Thus, the ensemble $\pd Q$ is said to be
\emph{polynomial-time samplable when the indices are
represented in unary} if there exists a probabilistic
polynomial-time algorithm $B$ such that for every $j\in J$
the random variable $B(1^j)$ is distributed according
to~$\pd Q_j$.

Suppose $K$ is an infinite subset of $\N$ and $D$ is a
subset of~$\B^*$. Also, let $(\pd D_k\st k\in K)$ be a
probability ensemble consisting of distributions on~$D$. We
assume that this probability ensemble is polynomial-time
samplable when the indices are represented in unary.
Furthermore, suppose $(D_k\st k\in K)$ is a family of
nonempty subsets of $D$ such that there exists a polynomial
$\theta$ satisfying $D_k\subseteq\B^{\le\theta(k)}$ for all
$k\in K$. This notation is used throughout the paper.

A function $\delta\colon K\to\R_+$ is called
\emph{negligible} if for every polynomial $\pi$ there exists
a nonnegative integer $n$ such that $\delta(k)\le1/\pi(k)$
whenever $k\in K$ and $k\ge n$. We denote by $\negl$ an
unspecified negligible function on~$K$. Any equality
containing $\negl(k)$ is meant to hold for all $k\in K$.

\section{Weakly Pseudo-Free Families of Computational and
Black-Box \sectionheadOmega-Algebras}
\label{s:wpsfreefams}

From now on, we assume that $\Omega$ is finite. This allows
us to avoid representation issues. In this section, we
formally define and discuss families of computational and
black-box $\Omega$-algebras, as well as some types of weak
pseudo-freeness (including post-quantum ones) for these
families. Of course, one can easily define the respective
types of pseudo-freeness.

Throughout the paper, we denote by $\sigma$ a function from
a subset of $\B^*$ onto~$F_\infty(\cl V)$. This function is
used for representation of elements of $F_\infty(\cl V)$ for
computational purposes. Let $H\in\cl V$ and
$g=(g_1,\dots,g_m)$, where $m\in\N\setminus\{0\}$ and
$g_1,\dots,g_m\in H$. Then we put
\begin{align*}
\Lambda(H,\cl V,\sigma,g)&=\{(t,u)\in(\dom\sigma)^2\st
\sigma(t),\sigma(u)\in F_m(\cl
V)\cs\sigma(t)\ne\sigma(u)\cs\sigma(t)(g)=\sigma(u)(g)\}\\
&=\bigcup_{\substack{v,w\in F_m(\cl V)\text{ s.t.}\\v\ne
w\land v(g)=w(g)}}(\sigma^{-1}(v)\times\sigma^{-1}(w)).
\end{align*}
It is natural to call a pair $(v,w)\in(F_m(\cl V))^2$ a
\emph{nontrivial relation} between $g_1,\dots,g_m$ if $v\ne
w$ and $v(g)=w(g)$. Then $\Lambda(H,\cl V,\sigma,g)$ is the
set of all representations of nontrivial relations between
$g_1,\dots,g_m$ using~$\sigma$.

\begin{example}[{representation of elements of $F_\infty(\cl
V)$ by straight-line programs, see
also~\cite[Example~3.13]{Ano21}
or~\cite[Example~2.10]{Ano22}}]
\label{ex:SLPrepres}
Denote by $\SLP_{\cl V}$ the function that takes each
straight-line program $u$ (over $\Omega$) to the element of
$F_\infty(\cl V)$ computed by $u$ from $a_1,\dots,a_m$,
where the nonnegative integer $m$ is an upper bound for all
integer elements of the sequence~$u$. (Of course, this
element of $F_\infty(\cl V)$ does not depend on~$m$.)
Usually we will write $\SLP$ instead of~$\SLP_{\cl V}$. It
is evident that $\SLP$ is a function onto~$F_\infty(\cl V)$.
In most of our results, we will use this function as the
function~$\sigma$. Note that this method of representation
(for elements of the free group) was used in~\cite{Hoh03}.
\end{example}

\subsection{Standard Model}
\label{ss:standmod}

A general definition of a family of computational
$\Omega$-algebras was given in~\cite{Ano21} (see
Definition~3.1 in that work). These families consist of
triples of the form $(H_d,\rho_d,\pd R_d)$, where $d$ ranges
over $D$, $H_d$ is an $\Omega$-algebra, $\rho_d$ is a
function from a subset of $\B^*$ onto $H_d$, and $\pd R_d$
is a probability distribution on $\dom\rho_d$ for any $d\in
D$. In this paper, we study only polynomially bounded
families $((H_d,\rho_d,\pd R_d)\st d\in D)$ of computational
$\Omega$-algebras that have unique representations of
elements. This means that the following conditions hold:
\begin{itemize}
\item There exists a polynomial $\eta$ such that
$\dom\rho_d\subseteq\B^{\le\eta(\acl d)}$ for all $d\in D$.
See also~\cite[Definition~3.3]{Ano21}.

\item For each $d\in D$, the function $\rho_d$ is
one-to-one. Hence we can assume that for every $d\in D$,
$H_d\subseteq\B^*$ and the unique representation of each
$h\in H_d$ is $h$ itself. Namely, we use the family
$((\dom\rho_d,\id_{\dom\rho_d},\pd R_d)\st d\in D)$ instead
of $((H_d,\rho_d,\pd R_d)\st d\in D)$. Here $\dom\rho_d$ is
considered as the unique $\Omega$-algebra such that $\rho_d$
is an isomorphism of this $\Omega$-algebra onto $H_d$ ($d\in
D$). See also~\cite[Definition~3.4 and Remark~3.5]{Ano21}.
Moreover, if $H_d\subseteq\B^*$, then we write $(H_d,\pd
R_d)$ instead of $(H_d,\id_{H_d},\pd R_d)$.
\end{itemize}

Now we give a formal definition of a family of computational
$\Omega$-algebras with the above restrictions. We also need
a variant of this definition without probability
distributions.

Suppose an $\Omega$-algebra $H_d\subseteq\B^*$ is assigned
to each $d\in D$. When necessary, we denote by $\pd H_d$ a
probability distribution on the (necessarily nonempty)
$\Omega$-algebra $H_d$ for every $d\in D$. Note that some
definitions in this subsection do not depend on these
probability distributions.

\begin{definition}[family of computational $\Omega$-algebras
without distributions]
\label{d:focOawod}
The family $(H_d\st d\in D)$ is called a \emph{family of
computational $\Omega$-algebras without distributions} if
the following two conditions hold:
\begin{roenum}
\item There exists a polynomial $\eta$ such that
$H_d\subseteq\B^{\le\eta(\acl d)}$ for all $d\in D$.

\item For every $\omega\in\Omega$ there exists a
deterministic polynomial-time algorithm that, given $d\in D$
and $h_1,\dots,h_{\ar\omega}\in H_d$, computes
$\omega(h_1,\dots,h_{\ar\omega})$ in~$H_d$.
\end{roenum}
\end{definition}

\begin{definition}[{family of computational
$\Omega$-algebras (with distributions), see
also~\cite[Definition~3.1]{Ano21}
or~\cite[Definition~2.6]{Ano22}}]
\label{d:focOa}
The family $((H_d,\pd H_d)\st d\in D)$ is said to be a
\emph{family of computational $\Omega$-algebras with
distributions} or simply a \emph{family of computational
$\Omega$-algebras} if the following two conditions hold:
\begin{roenum}
\item The family $(H_d\st d\in D)$ is a family of
computational $\Omega$-algebras without distributions.

\item The probability ensemble $(\pd H_d\st d\in D)$ is
polynomial-time samplable.
\end{roenum}
\end{definition}

Thus, by default, a family of computational
$\Omega$-algebras is a family with distributions. The main
motivation for introducing the notion of a family of
computational $\Omega$-algebras without distributions is to
abstract from these distributions whenever this is possible.

\begin{definition}[family is in~$\cl V$]
We say that the family $(H_d\st d\in D)$ (or $((H_d,\pd
H_d)\st d\in D)$) is in $\cl V$ if $H_d\in\cl V$ for all
$d\in D$.
\end{definition}

\begin{definition}[weakly pseudo-free family of
computational $\Omega$-algebras]
\label{d:wpsfreefocOa}
Assume that $((H_d,\pd H_d)\st d\in D)$ is a family of
computational $\Omega$-algebras in~$\cl V$. Then this family
is called \emph{weakly pseudo-free} in $\cl V$ with respect
to $(\pd D_k\st k\in K)$ and $\sigma$ if for any polynomial
$\pi$ and any probabilistic polynomial-time algorithm~$A$,
\[
\Prob{A(1^k,\rv d,\rv g)\in\Lambda(H_{\rv d},\cl
V,\sigma,\rv g)}=\negl(k),
\]
where $\rv d\rvd\pd D_k$ and $\rv g\rvd\pd H_{\rv
d}^{\pi(k)}$.
\end{definition}

\begin{remark}
Note that for any $H\in\cl V$ and any $g\in H^m$ (where
$m\in\N\setminus\{0\}$), $\Lambda(H,\cl V,\sigma,g)$
coincides with $\Sigma_1'(H,\cl V,\sigma,g)$ in the notation
of~\cite{Ano21}. So weak pseudo-freeness in the sense of
Definition~\ref{d:wpsfreefocOa} is in fact weak
$1$-pseudo-freeness in the sense
of~\cite[Remark~3.9]{Ano21}. However, it is easy to see that
weak $1$-pseudo-freeness in $\cl V$ with respect to $(\pd
D_k\st k\in K)$ and $\sigma$ is equivalent to weak
pseudo-freeness in $\cl V$ with respect to $(\pd D_k\st k\in
K)$ and $\sigma$ (see~\cite[Remark~3.9]{Ano21}).
\end{remark}

\begin{definition}[worst-case weakly pseudo-free family of
computational $\Omega$-algebras without distributions]
\label{d:wcwpsfreefocOa}
Assume that $(H_d\st d\in D)$ is a family of nonempty
computational $\Omega$-algebras in $\cl V$ without
distributions. Then this family is said to be
\emph{worst-case weakly pseudo-free} in $\cl V$ with respect
to $(D_k\st k\in K)$ and $\sigma$ if for any polynomial
$\pi$ and any probabilistic polynomial-time algorithm~$A$,
\[
\min_{d\in D_k\cs g\in
H_d^{\pi(k)}}\Prob{A(1^k,d,g)\in\Lambda(H_d,\cl
V,\sigma,g)}=\negl(k).
\]
\end{definition}

\subsection{Black-Box \headOmega-Algebra Model}
\label{ss:bbOamod}

Babai and Szemer\'edi~\cite{BS84} introduced a model of
computation in finite groups, called the black-box group
model. In this model, elements of a finite group $G$ are
represented for computational purposes by bit strings of the
same length (depending on $G$) and the group operations in
$G$ are performed by an oracle. Such groups are called
black-box groups. This model can be naturally generalized to
$\Omega$-algebras.

In this paper, unless otherwise specified, we require every
element of a black-box $\Omega$-algebra to be represented by
a unique bit string. Therefore we can assume that for any
black-box $\Omega$-algebra $H$, we have $H\subseteq\B^n$,
where $n\in\N$, and the unique representation of each $h\in
H$ is $h$ itself.

\begin{definition}[black-box $\Omega$-algebra]
Any $\Omega$-algebra $H$ such that $H\subseteq\B^n$ for some
$n\in\N$ is called a \emph{black-box $\Omega$-algebra}.
\end{definition}

Let $H$ be a black-box $\Omega$-algebra. It is evident that
if $H\ne\emptyset$, then $H\subseteq\B^n$ for a single
$n\in\N$. Otherwise this inclusion holds for all $n\in\N$.

\begin{definition}[$\Omega$-oracle]
\label{d:Omegaoracle}
An oracle is said to be an \emph{$\Omega$-oracle} for $H$
if, given any query of the form
$(\omega,h_1,\dots,h_{\ar\omega})$ with $\omega\in\Omega$
and $h_1,\dots,h_{\ar\omega}\in H$, this oracle returns
$\omega(h_1,\dots,h_{\ar\omega})$. (On other queries, the
behavior of the oracle may be arbitrary.)
\end{definition}

If $\Omega$ is a set of group operation symbols, then an
$\Omega$-oracle for a black-box group is called a
\emph{group oracle}. Note that some authors require a group
oracle for a black-box group to perform only the
multiplication and the inversion in this group (see,
e.g.,~\cite{BS84, Bab91, BB93}). It is obvious that the
identity element of any group can be computed as $g^{-1}g$,
where $g$ is an arbitrary element of this group.

\begin{definition}[black-box $\Omega$-algebra algorithm]
\label{d:bbOaalg}
A (possibly probabilistic) algorithm $A$ is called a
\emph{black-box $\Omega$-algebra algorithm} if, when $A$
performs a computation in an arbitrary black-box
$\Omega$-algebra,
\begin{itemize}
\item $A$ has access to an $\Omega$-oracle for this
black-box $\Omega$-algebra and

\item all queries made by $A$ to this $\Omega$-oracle have
the form specified in Definition~\ref{d:Omegaoracle}.
\end{itemize}
\end{definition}

Suppose $A$ is a probabilistic black-box $\Omega$-algebra
algorithm. Consider a computation of $A$ in the black-box
$\Omega$-algebra~$H$. Then Definitions~\ref{d:Omegaoracle}
and~\ref{d:bbOaalg} imply that this computation and its
output depend only on $H$ but not on the $\Omega$-oracle for
$H$ used by~$A$. This is because the answers of this oracle
to the queries made by $A$ are completely determined by~$H$.
Therefore we can denote by $A^H$ the algorithm $A$
performing a computation in $H$ and hence using an
$\Omega$-oracle for~$H$. If the algorithm $A^H$ has access
to an additional oracle, say, $O$, then we denote this
algorithm by~$A^{H,O}$.

Similarly to Subsection~\ref{ss:standmod}, let a black-box
$\Omega$-algebra $H_d$ be assigned to each $d\in D$. When
necessary, we denote by $\pd H_d$ a probability distribution
on the (necessarily nonempty) $\Omega$-algebra $H_d$ for
every $d\in D$. We give analogs of
Definitions~\ref{d:focOawod}, \ref{d:focOa},
\ref{d:wpsfreefocOa}, and~\ref{d:wcwpsfreefocOa} in the
black-box $\Omega$-algebra model. Note that some of these
analogs do not depend on the probability distributions~$\pd
H_d$.

\begin{definition}[family of black-box $\Omega$-algebras
without and with distributions]\leavevmode
\label{d:fobbOawoawd}
\begin{itemize}
\item The family $(H_d\st d\in D)$ is called a \emph{family
of black-box $\Omega$-algebras without distributions} if
there exist a function $\xi\colon D\to\N$ and a polynomial
$\eta$ such that $H_d\subseteq\B^{\xi(d)}$ and
$\xi(d)\le\eta(\acl d)$ for all $d\in D$.

\item The family $((H_d,\pd H_d)\st d\in D)$ is said to be a
\emph{family of black-box $\Omega$-algebras with
distributions} or simply a \emph{family of black-box
$\Omega$-algebras} if $(H_d\st d\in D)$ is a family of
black-box $\Omega$-algebras without distributions.
\end{itemize}
\end{definition}

By default, similarly to families of computational
$\Omega$-algebras, a family of black-box $\Omega$-algebras
is a family with distributions. The main motivation for
introducing the notion of a family of black-box
$\Omega$-algebras without distributions is to abstract from
these distributions whenever this is possible.
Cf.\ Subsection~\ref{ss:standmod}.

\begin{definition}[weakly pseudo-free family of black-box
$\Omega$-algebras]
\label{d:wpsfreefobbOa}
Assume that $((H_d,\pd H_d)\st d\in D)$ is a family of
black-box $\Omega$-algebras in~$\cl V$. Then this family is
called \emph{weakly pseudo-free} in $\cl V$ with respect to
$(\pd D_k\st k\in K)$ and $\sigma$ if for any polynomials
$\pi$ and $\tau$ and any probabilistic polynomial-time
black-box $\Omega$-algebra algorithm~$A$,
\[
\Prob{A^{H_{\rv d}}(1^k,\rv d,\rv g,\rv r)\in\Lambda(H_{\rv
d},\cl V,\sigma,\rv g)}=\negl(k),
\]
where $\rv d\rvd\pd D_k$, $\rv g\rvd\pd H_{\rv d}^{\pi(k)}$,
and $\rv r\rvd\pd H_{\rv d}^{\tau(k)}$.
\end{definition}

In Definition~\ref{d:wpsfreefobbOa}, $\rv r$ is used by $A$
as an additional source of random elements of~$H_{\rv d}$.

\begin{remark}
For a probability distribution $\pd Y$ on $\B^*$, let
$\Smpl\pd Y$ be a probabilistic oracle that returns a random
sample from $\pd Y$ on every query. These samples are chosen
independently of each other regardless of the queries.
Consider the definition obtained from
Definition~\ref{d:wpsfreefobbOa} by removing $\rv r$ (and
$\tau$) and giving the algorithm $A$ access to~$\Smpl\pd
H_{\rv d} $. It is easy to see that this definition is
equivalent to the original one. Namely, assume that
$((H_d,\pd H_d)\st d\in D)$ is a family of black-box
$\Omega$-algebras in $\cl V$, as in
Definition~\ref{d:wpsfreefobbOa}. Then this family is weakly
pseudo-free in $\cl V$ with respect to $(\pd D_k\st k\in K)$
and $\sigma$ if and only if for any polynomial $\pi$ and any
probabilistic polynomial-time black-box $\Omega$-algebra
algorithm~$A$,
\[
\Prob{A^{H_{\rv d},\Smpl\pd H_{\rv d}}(1^k,\rv d,\rv
g)\in\Lambda(H_{\rv d},\cl V,\sigma,\rv g)}=\negl(k),
\]
where $\rv d\rvd\pd D_k$ and $\rv g\rvd\pd H_{\rv
d}^{\pi(k)}$.
\end{remark}

\begin{definition}[worst-case weakly pseudo-free family of
black-box $\Omega$-algebras without distributions]
\label{d:wcwpsfreefobbOa}
Assume that $(H_d\st d\in D)$ is a family of nonempty
black-box $\Omega$-algebras in $\cl V$ without
distributions. Then this family is said to be
\emph{worst-case weakly pseudo-free} in $\cl V$ with respect
to $(D_k\st k\in K)$ and $\sigma$ if for any polynomial
$\pi$ and any probabilistic polynomial-time black-box
$\Omega$-algebra algorithm~$A$,
\[
\min_{d\in D_k\cs g\in
H_d^{\pi(k)}}\Prob{A^{H_d}(1^k,d,g)\in\Lambda(H_d,\cl
V,\sigma,g)}=\negl(k).
\]
\end{definition}

\subsection{Quantum Computation Model}

We assume that the reader is familiar with the basics of
quantum computation. For a detailed introduction to this
model of computation, see~\cite{NC10}, \cite[Part~2]{KSV02},
or~\cite[Section~2 and Appendix~C]{Lom04}.

The purpose of this subsection is to give analogs of
Definitions~\ref{d:wpsfreefocOa}, \ref{d:wcwpsfreefocOa},
\ref{d:Omegaoracle}, \ref{d:bbOaalg}, \ref{d:wpsfreefobbOa},
and~\ref{d:wcwpsfreefobbOa} in the quantum computation
model. For Definitions~\ref{d:wpsfreefocOa} and
\ref{d:wcwpsfreefocOa}, this is straightforward. Namely, it
suffices to require the algorithm $A$ to be quantum.

\begin{definition}[post-quantum weakly pseudo-free family of
computational $\Omega$-algebras] Let $((H_d,\pd H_d)\st d\in
D)$ be a family of computational $\Omega$-algebras in~$\cl
V$. Then this family is called \emph{post-quantum weakly
pseudo-free} in $\cl V$ with respect to $(\pd D_k\st k\in
K)$ and $\sigma$ if for any polynomial $\pi$ and any
polynomial-time quantum algorithm~$A$,
\[
\Prob{A(1^k,\rv d,\rv g)\in\Lambda(H_{\rv d},\cl
V,\sigma,\rv g)}=\negl(k),
\]
where $\rv d\rvd\pd D_k$ and $\rv g\rvd\pd H_{\rv
d}^{\pi(k)}$.
\end{definition}

\begin{definition}[post-quantum worst-case weakly
pseudo-free family of computational $\Omega$-algebras
without distributions] Suppose $(H_d\st d\in D)$ is a family
of nonempty computational $\Omega$-algebras in $\cl V$
without distributions. Then this family is said to be
\emph{post-quantum worst-case weakly pseudo-free} in $\cl V$
with respect to $(D_k\st k\in K)$ and $\sigma$ if for any
polynomial $\pi$ and any polynomial-time quantum
algorithm~$A$,
\[
\min_{d\in D_k\cs g\in
H_d^{\pi(k)}}\Prob{A(1^k,d,g)\in\Lambda(H_d,\cl
V,\sigma,g)}=\negl(k).
\]
\end{definition}

Let $H$ be a black-box $\Omega$-algebra and let $n$ be a
nonnegative integer such that $H\subseteq\B^n$. (If
$H\ne\emptyset$, then $n$ is unique; otherwise $n$ can be
chosen arbitrarily.) We denote by $Q_n$ the state space of
$n$ qubits. Suppose $m\in\N\setminus\{0\}$. Consider a
system of $m$ quantum registers, each consisting of $n$
qubits. Of course, the state space of this system is the
$m$th tensor power of $Q_n$, denoted by~$Q_n^{\otimes m}$.
If for every $i\in\{1,\dots,m\}$ the $i$th quantum register
is in the state $\ket{y_i}\in Q_n$, then we write the state
of the total system as $\ket{y_1}\dots\ket{y_m}$ instead of
$\ket{y_1}\otimes\dots\otimes\ket{y_m}$. (We use the Dirac
ket notation $\ket\cdot$ for quantum state vectors.) For a
unitary operator $W$ on $Q_n^{\otimes r}$ (where
$r\in\{1,\dots,m\}$) and a tuple $(i_1,\dots,i_r)$ of
distinct integers in $\{1,\dots,m\}$, we denote by
$W[i_1,\dots,i_r]$ the unitary operator on $Q_n^{\otimes m}$
acting as $W$ on the system of quantum registers with
numbers $i_1,\dots,i_r$ (taken in this order) and leaving
all other registers unchanged.

\begin{definition}[quantum $\Omega$-oracle]
\label{d:quOmegaoracle}
A family $(U_\omega\st\omega\in\Omega)$, where $U_\omega$ is
a unitary operator on $Q_n^{\otimes((\ar\omega)+1)}$ for
every $\omega\in\Omega$, is called a \emph{quantum
$\Omega$-oracle} for $H$ if
\[
U_\omega(\ket{h_1}\dots\ket{h_{\ar\omega}}\ket v)
=\ket{h_1}\dots\ket{h_{\ar\omega}}\ket{v\oplus
\omega(h_1,\dots,h_{\ar\omega})}
\]
for all $\omega\in\Omega$, $h_1,\dots,h_{\ar\omega}\in H$,
and $v\in\B^n$.
\end{definition}

Similarly to Subsection~\ref{ss:bbOamod}, if $\Omega$ is a
set of group operation symbols, then a quantum
$\Omega$-oracle for a black-box group is called a
\emph{quantum group oracle}.

\begin{remark}
\label{r:qugroracle}
In this remark, we assume that $\Omega$ is a set of group
operation symbols and $H$ is a (black-box) group. In some
works (e.g., in~\cite[Section~2]{Wat01}
and~\cite[Section~2]{IMS03}), a quantum group oracle for $H$
is given by a pair $(M,M')$ of unitary operators on
$Q_n^{\otimes2}$ such that
\begin{equation}
\label{e:condMMpr}
M(\ket g\ket h)=\ket g\ket{gh}\quad\text{and}\quad M'(\ket
g\ket h)=\ket g\ket{g^{-1}h}\quad\text{for all }g,h\in H.
\end{equation}
However, such a pair can be efficiently implemented using a
quantum group oracle for $H$ in the sense of
Definition~\ref{d:quOmegaoracle}, and vice versa. Details
follow.

Let $\CNOT_n$ be the unitary operator on $Q_n^{\otimes2}$
such that $\CNOT_n(\ket v\ket w)=\ket v\ket{v\oplus w}$ for
all $v,w\in\B^n$. Of course, $\CNOT_n$ can be efficiently
implemented by a quantum circuit consisting of $n$
controlled-NOT gates (these gates implement~$\CNOT_1$).

Denote by $\mu$, $\iota$, and $1$ the symbols in $\Omega$
for the multiplication, the inversion, and the identity
element in a group, respectively.

\begin{roenum}
\item Suppose $(U_\mu,U_\iota,U_1)$ is a quantum group
oracle for~$H$. Then
\[
\nreg{\ket g}\ket h\nreg{\ket{gh}}=U_\mu(\ket g\ket
h\ket{0^n})\quad\text{and}\quad\nreg{\ket g}\ket
h\ket{g^{-1}}\nreg{\ket{g^{-1}h}}
=U_\mu[3,2,4]U_\iota[1,3](\ket g\ket h\ket{0^n}\ket{0^n})
\]
for all $g,h\in H$. This yields an efficient implementation
of a pair $(M,M')$ of unitary operators on $Q_n^{\otimes2}$
satisfying condition~\eqref{e:condMMpr}. The contents of the
registers that form the outputs of $M$ and $M'$ are
underlined.

\item Let $(M,M')$ be a pair of unitary operators on
$Q_n^{\otimes2}$ satisfying condition~\eqref{e:condMMpr}.
Then
\begin{align*}
\nreg{\ket g\ket h\ket{v\oplus
gh}}\ket{gh}&=\CNOT_n[4,3]M[1,4]\CNOT_n[2,4](\ket g\ket
h\ket{v}\ket{0^n}),\\\nreg{\ket h\ket{v\oplus
h^{-1}}}\ket{h^{-1}}&=\CNOT_n[3,2](M'[1,3])^2\CNOT_n[1,3]
(\ket h\ket{v}\ket{0^n}),\text{ and}\\\ket
h\nreg{\ket{v\oplus1}}\ket
1&=\CNOT_n[3,2]M'[1,3]\CNOT_n[1,3](\ket h\ket{v}\ket{0^n})
\end{align*}
for all $g,h\in H$ and $v\in\B^n$. This yields an efficient
implementation of a quantum group oracle
$(U_\mu,U_\iota,U_1)$ for~$H$. The contents of the registers
that form the outputs of $U_\mu$, $U_\iota$, and $U_1$ are
underlined.
\end{roenum}
\end{remark}

For each $\omega\in\Omega$, we denote by $E_{H,\omega}$ the
subspace of $Q_n^{\otimes((\ar\omega)+1)}$ spanned by
\[
\{\ket{h_1}\dots\ket{h_{\ar\omega}}\ket v\st
h_1,\dots,h_{\ar\omega}\in H\cs v\in\B^n\}.
\]

\begin{definition}[black-box $\Omega$-algebra quantum
algorithm]
\label{d:bbOaqalg}
A quantum algorithm $A$ is said to be a \emph{black-box
$\Omega$-algebra quantum algorithm} if, when $A$ performs a
computation in an arbitrary black-box $\Omega$-algebra~$G$,
\begin{itemize}
\item $A$ has access to a quantum $\Omega$-oracle (say,
$(U_\omega\st\omega\in\Omega)$) for $G$ and

\item for every $\omega\in\Omega$, the operator $U_\omega$
is applied only to state vectors in~$E_{G,\omega}$.
\end{itemize}
\end{definition}

If $\Omega$ is a set of group operation symbols, then a
black-box $\Omega$-algebra quantum algorithm is called a
\emph{black-box group quantum algorithm} when we are
interested in its computation only in black-box groups.

Suppose $A$ is a black-box $\Omega$-algebra quantum
algorithm. Consider a computation of $A$ in the black-box
$\Omega$-algebra~$H$. Then Definitions~\ref{d:quOmegaoracle}
and~\ref{d:bbOaqalg} imply that this computation and its
output depend only on $H$ but not on the quantum
$\Omega$-oracle for $H$ (say,
$(U_\omega\st\omega\in\Omega)$) used by~$A$. This is because
for every $\omega\in\Omega$, the action of the operator
$U_\omega$ on $E_{H,\omega}$ is completely determined
by~$H$. Therefore, similarly to Subsection~\ref{ss:bbOamod},
we can denote by $A^H$ the algorithm $A$ performing a
computation in $H$ and hence using a quantum $\Omega$-oracle
for~$H$.

\begin{definition}[post-quantum weakly pseudo-free family of
black-box $\Omega$-algebras] Let $((H_d,\pd H_d)\st d\in D)$
be a family of black-box $\Omega$-algebras in~$\cl V$. Then
this family is called \emph{post-quantum weakly pseudo-free}
in $\cl V$ with respect to $(\pd D_k\st k\in K)$ and
$\sigma$ if for any polynomials $\pi$ and $\tau$ and any
polynomial-time black-box $\Omega$-algebra quantum
algorithm~$A$,
\[
\Prob{A^{H_{\rv d}}(1^k,\rv d,\rv g,\rv r)\in\Lambda(H_{\rv
d},\cl V,\sigma,\rv g)}=\negl(k),
\]
where $\rv d\rvd\pd D_k$, $\rv g\rvd\pd H_{\rv d}^{\pi(k)}$,
and $\rv r\rvd\pd H_{\rv d}^{\tau(k)}$.
\end{definition}

\begin{definition}[post-quantum worst-case weakly
pseudo-free family of black-box $\Omega$-algebras without
distributions] Suppose $(H_d\st d\in D)$ is a family of
nonempty black-box $\Omega$-algebras in $\cl V$ without
distributions. Then this family is said to be
\emph{post-quantum worst-case weakly pseudo-free} in $\cl V$
with respect to $(D_k\st k\in K)$ and $\sigma$ if for any
polynomial $\pi$ and any polynomial-time black-box
$\Omega$-algebra quantum algorithm~$A$,
\[
\min_{d\in D_k\cs g\in
H_d^{\pi(k)}}\Prob{A^{H_d}(1^k,d,g)\in\Lambda(H_d,\cl
V,\sigma,g)}=\negl(k).
\]
\end{definition}

\subsection{Relations between the Types of Weak
Pseudo-Freeness}
\label{ss:relations}

In this subsection, weak pseudo-freeness of any type means
weak pseudo-freeness of this type in $\cl V$ with respect to
$(\pd D_k\st k\in K)$ (or $(D_k\st k\in K)$ in the
worst-case setting) and~$\sigma$.

\begin{remark}
\label{r:wpsfreeiswcwpsfree}
In this remark, we assume that $\supp\pd D_k\subseteq D_k$
for all $k\in K$. It is easy to see that if $((H_d,\pd
H_d)\st d\in D)$ is a weakly (resp., post-quantum weakly)
pseudo-free family of computational $\Omega$-algebras, then
$(H_d\st d\in D)$ is a worst-case weakly (resp.,
post-quantum worst-case weakly) pseudo-free family of
computational $\Omega$-algebras without distributions.
Similarly, if $((H_d,\pd H_d)\st d\in D)$ is a weakly
(resp., post-quantum weakly) pseudo-free family of black-box
$\Omega$-algebras, then $(H_d\st d\in D)$ is a worst-case
weakly (resp., post-quantum worst-case weakly) pseudo-free
family of black-box $\Omega$-algebras without distributions.
This is because the minimum of a real-valued random variable
does not exceed the expectation of this random variable,
provided that these minimum and expectation exist.
\end{remark}

\begin{remark}
\label{r:pqwpsfreeiswpsfree}
It is well known that every probabilistic polynomial-time
algorithm can be simulated by a polynomial-time quantum
algorithm (see, e.g.,~\cite[Subsection~1.4.1]{NC10}
or~\cite[Section~7]{KSV02}). Also, every probabilistic
polynomial-time black-box $\Omega$-algebra algorithm can be
simulated by a polynomial-time black-box $\Omega$-algebra
quantum algorithm. Therefore any post-quantum weakly (resp.,
post-quantum worst-case weakly) pseudo-free family of
computational or black-box $\Omega$-algebras with (resp.,
without) distributions is also weakly (resp., worst-case
weakly) pseudo-free.
\end{remark}

\begin{remark}
\label{r:wpffocOaiffwpffobbOa}
Let $\fm H=((H_d,\pd H_d)\st d\in D)$ (resp., $\fm H=(H_d\st
d\in D)$) be a family of computational $\Omega$-algebras in
$\cl V$ with (resp., without) distributions. Assume that
there exist a function $\xi\colon D\to\N$ and a polynomial
$\eta$ such that $H_d\subseteq\B^{\xi(d)}$ and
$\xi(d)\le\eta(\acl d)$ for all $d\in D$. Then by
Definition~\ref{d:fobbOawoawd}, $\fm H$ is a family of
black-box $\Omega$-algebras with (resp., without)
distributions. Furthermore, $\fm H$ is weakly pseudo-free or
post-quantum weakly pseudo-free (resp., worst-case weakly
pseudo-free or post-quantum worst-case weakly pseudo-free)
as a family of computational $\Omega$-algebras with (resp.,
without) distributions if and only if $\fm H$ satisfies the
same weak pseudo-freeness condition as a family of black-box
$\Omega$-algebras with (resp., without) distributions. This
can be proved straightforwardly.
\end{remark}

\begin{proposition}
\label{p:exwpffocOaimplexwpffobbOa}
Assume that there exists a weakly pseudo-free or a
post-quantum weakly pseudo-free (resp., a worst-case weakly
pseudo-free or a post-quantum worst-case weakly pseudo-free)
family of computational $\Omega$-algebras with (resp.,
without) distributions. Then there exists a family of
black-box $\Omega$-algebras with (resp., without)
distributions that satisfies the same weak pseudo-freeness
condition as a family of black-box $\Omega$-algebras with
(resp., without) distributions.
\end{proposition}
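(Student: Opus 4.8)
The plan is to reduce to the same‑length case already handled by Remark~\ref{r:wpffocOaiffwpffobbOa}. Starting from a weakly pseudo‑free or post‑quantum weakly pseudo‑free (resp., worst‑case weakly pseudo‑free or post‑quantum worst‑case weakly pseudo‑free) family $\fm H=((H_d,\pd H_d)\st d\in D)$ (resp., $\fm H=(H_d\st d\in D)$) of computational $\Omega$‑algebras in $\cl V$, fix a polynomial $\eta$ with $H_d\subseteq\B^{\le\eta(\acl d)}$ for all $d\in D$ (Definition~\ref{d:focOawod}). Put $\xi(d)=\eta(\acl d)+1$ and define a self‑delimiting padding $\phi_d\colon H_d\to\B^{\xi(d)}$ by $\phi_d(h)=0^{\xi(d)-\acl h-1}1h$; this is well defined because $\acl h\le\eta(\acl d)=\xi(d)-1$, it is injective, and $\phi_d^{-1}$ is computable in polynomial time on $\phi_d(H_d)$ (delete the leftmost $1$ together with all zeros preceding it). Let $H'_d:=\phi_d(H_d)\subseteq\B^{\xi(d)}$, carrying the unique $\Omega$‑algebra structure that makes $\phi_d$ an isomorphism of $H_d$ onto $H'_d$, and in the ``with distributions'' case let $\pd H'_d:=\phi_d(\pd H_d)$.

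First I would verify that $\fm H'=((H'_d,\pd H'_d)\st d\in D)$ (resp., $\fm H'=(H'_d\st d\in D)$) is a family of computational $\Omega$‑algebras in $\cl V$ satisfying the hypotheses of Remark~\ref{r:wpffocOaiffwpffobbOa}. Here $H'_d\subseteq\B^{\xi(d)}$ with $\xi(d)\le\eta'(\acl d)$ for the polynomial $\eta'(n)=\eta(n)+1$; each fundamental operation of $H'_d$ is computed in polynomial time by decoding its arguments with $\phi_d^{-1}$, applying the corresponding polynomial‑time operation of $H_d$, and re‑encoding the result with $\phi_d$ (the number $\xi(d)$ is obtained from $\acl d$ and the fixed $\eta$); in the ``with distributions'' case the ensemble $(\pd H'_d\st d\in D)$ is polynomial‑time samplable, being the image of the polynomial‑time samplable ensemble $(\pd H_d\st d\in D)$ under the polynomial‑time maps $\phi_d$; and $H'_d\cong H_d\in\cl V$, so $H'_d\in\cl V$ since $\cl V$, being a variety, is closed under isomorphism.

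Next I would transport the weak pseudo‑freeness condition from $\fm H$ to $\fm H'$. The crucial observation is that for every $d\in D$, every $m\in\N\setminus\{0\}$, and every $g=(g_1,\dots,g_m)\in H_d^m$, writing $g'=\phi_d(g)=(\phi_d(g_1),\dots,\phi_d(g_m))\in(H'_d)^m$, one has $\Lambda(H'_d,\cl V,\sigma,g')=\Lambda(H_d,\cl V,\sigma,g)$: the set $\Lambda(\,\cdot\,)$ is a set of pairs of elements of $\dom\sigma\subseteq\B^*$ and depends on the $\Omega$‑algebra only through which relations $v(g)=w(g)$ (with $v,w\in F_m(\cl V)$) hold, and since $\phi_d$ is an isomorphism we have $v(g')=\phi_d(v(g))$, hence $v(g')=w(g')$ iff $v(g)=w(g)$. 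Now from any probabilistic (resp.\ quantum) polynomial‑time algorithm $A'$ and any polynomial $\pi$ one builds the algorithm $A$ that on input $(1^k,d,g)$ computes $g'=\phi_d(g)$ in polynomial time and returns $A'(1^k,d,g')$. Using $\phi_{\rv d}(\pd H_{\rv d}^{\pi(k)})=(\pd H'_{\rv d})^{\pi(k)}$ (Subsection~\ref{ss:probprelim}) in the average‑case definitions, and the fact that $\phi_d$ maps $H_d^{\pi(k)}$ bijectively onto $(H'_d)^{\pi(k)}$ in the worst‑case ones, the quantity appearing in the relevant weak pseudo‑freeness definition for $A$ and $\pi$ against $\fm H$ coincides with the one for $A'$ and $\pi$ against $\fm H'$; since the former is negligible by hypothesis, so is the latter, and hence $\fm H'$ satisfies the same weak pseudo‑freeness condition as a family of computational $\Omega$‑algebras.

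Finally, Remark~\ref{r:wpffocOaiffwpffobbOa} applied to $\fm H'$ yields that $\fm H'$ is a family of black‑box $\Omega$‑algebras in $\cl V$ satisfying the same weak pseudo‑freeness condition as a family of black‑box $\Omega$‑algebras, which is the family required by the proposition. I do not expect any real obstacle here; the only point needing a little care is the design of the encoding $\phi_d$, which must at once make all representations of elements of $H'_d$ have a single length polynomially bounded in $\acl d$, keep the fundamental operations (and, in the average‑case setting, the sampler) polynomial‑time, and be an isomorphism, so that the relation sets $\Lambda$, and hence weak pseudo‑freeness, carry over unchanged.
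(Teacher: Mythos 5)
Your proposal is correct and follows essentially the same route as the paper: pad each element to a fixed length $\eta(\acl d)+1$ via an injective, polynomial-time invertible encoding (the paper uses $u\mapsto u10^{\eta(\acl d)-\acl u}$, you pad on the left instead), transport the $\Omega$-algebra structure and distributions along this isomorphism, use the invariance of $\Lambda(\cdot,\cl V,\sigma,\cdot)$ under isomorphisms to transfer the weak pseudo-freeness condition, and conclude via Remark~\ref{r:wpffocOaiffwpffobbOa}. No gaps.
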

\begin{proof}
For any $n\in\N$, let $\alpha_n$ be the one-to-one function
from $\B^{\le n}$ onto $\B^{n+1}\setminus\{0^{n+1}\}$
defined by $\alpha_n(u)=u10^{n-\acl u}$ for all $u\in\B^{\le
n}$. (Here, of course, $u10^{n-\acl u}$ denotes the
concatenation of $u$, $1$, and $0^{n-\acl u}$.) Then the
functions $(1^n,u)\mapsto\alpha_n(u)$ and
$(1^n,t)\mapsto\alpha_n^{-1}(t)$, where $n\in\N$,
$u\in\B^{\le n}$, and $t\in\B^{n+1}\setminus\{0^{n+1}\}$,
are polynomial-time computable. (Note that $1^n$ in
$(1^n,t)$ is redundant; we write it only for the reader's
convenience.)

Suppose $\fm G=((G_d,\pd G_d)\st d\in D)$ is a family of
computational $\Omega$-algebras in~$\cl V$. Choose a
polynomial $\eta$ such that $G_d\subseteq\B^{\le\eta(\acl
d)}$ for all $d\in D$. For each such $d$, let
$H_d=\alpha_{\eta(\acl d)}(G_d)\subseteq\B^{\eta(\acl
d)+1}\setminus\{0^{\eta(\acl d)+1}\}$ and $\pd
H_d=\alpha_{\eta(\acl d)}(\pd G_d)$. Consider $H_d$ as the
unique $\Omega$-algebra such that the restriction of
$\alpha_{\eta(\acl d)}$ to $G_d$ is an isomorphism of $G_d$
onto~$H_d$. Then it is easy to see that $\fm H=((H_d,\pd
H_d)\st d\in D)$ is a family of computational
$\Omega$-algebras in~$\cl V$. Moreover, $\fm H$ is also a
family of black-box $\Omega$-algebras (see
Remark~\ref{r:wpffocOaiffwpffobbOa}). This is because there
exists a polynomial $\eta'$ such that $\eta(n)+1\le\eta'(n)$
for all $n\in\N$.

Assume that $\fm G$ is a weakly pseudo-free family of
computational $\Omega$-algebras. It is easy to show that for
any isomorphic $\Omega$-algebras $G,H\in\cl V$, any $g\in
G^m$, where $m\in\N\setminus\{0\}$, and any isomorphism
$\alpha\colon G\to H$, we have $\Lambda(H,\cl
V,\sigma,\alpha(g))=\Lambda(G,\cl V,\sigma,g)$. This implies
that $\fm H$ is a weakly pseudo-free family of computational
$\Omega$-algebras. Indeed, suppose $\pi$ is a polynomial and
$A$ is a probabilistic polynomial-time algorithm trying to
break the weak pseudo-freeness of $\fm H$ for $\pi$ (i.e.,
to violate the condition of Definition~\ref{d:wpsfreefocOa}
for $\fm H$ and~$\pi$). Let $B$ be a probabilistic
polynomial-time algorithm (trying to break the weak
pseudo-freeness of $\fm G$ for $\pi$) that on input
$(1^k,d,g)$ for every $k\in K$, $d\in\supp\pd D_k$, and
$g\in(\supp\pd G_d)^{\pi(k)}$ runs $A$ on input
$(1^k,d,\alpha_{\eta(\acl d)}(g))$ and returns the output
(if it exists). Then
\begin{align*}
\Prob{A(1^k,\rv d,\rv h)\in\Lambda(H_{\rv d},\cl
V,\sigma,\rv h)}&=\Prob{A(1^k,\rv d,\alpha_{\eta(\acl{\rv
d})}(\rv g))\in\Lambda(H_{\rv d},\cl
V,\sigma,\alpha_{\eta(\acl{\rv d})}(\rv g))}\\
&=\Prob{B(1^k,\rv d,\rv g)\in\Lambda(G_{\rv d},\cl
V,\sigma,\rv g)}=\negl(k),
\end{align*}
where $\rv d\rvd\pd D_k$, $\rv h\rvd\pd H_{\rv d}^{\pi(k)}$,
and $\rv g\rvd\pd G_{\rv d}^{\pi(k)}$. Here we use the fact
that the random variables $(\rv d,\rv h)$ and $(\rv
d,\alpha_{\eta(\acl{\rv d})}(\rv g))$ are identically
distributed.

By Remark~\ref{r:wpffocOaiffwpffobbOa}, $\fm H$ is also a
weakly pseudo-free family of black-box $\Omega$-algebras.
Thus, if there exists a weakly pseudo-free family of
computational $\Omega$-algebras, then there exists a weakly
pseudo-free family of black-box $\Omega$-algebras. For other
types of weak pseudo-freeness mentioned in the proposition,
the proofs are the same, \latin{mutatis mutandis}.
\end{proof}

We illustrate the statements of
Remarks~\ref{r:wpsfreeiswcwpsfree}
and~\ref{r:pqwpsfreeiswpsfree} and of
Proposition~\ref{p:exwpffocOaimplexwpffobbOa} by the diagram
in Figure~\ref{f:relations}.

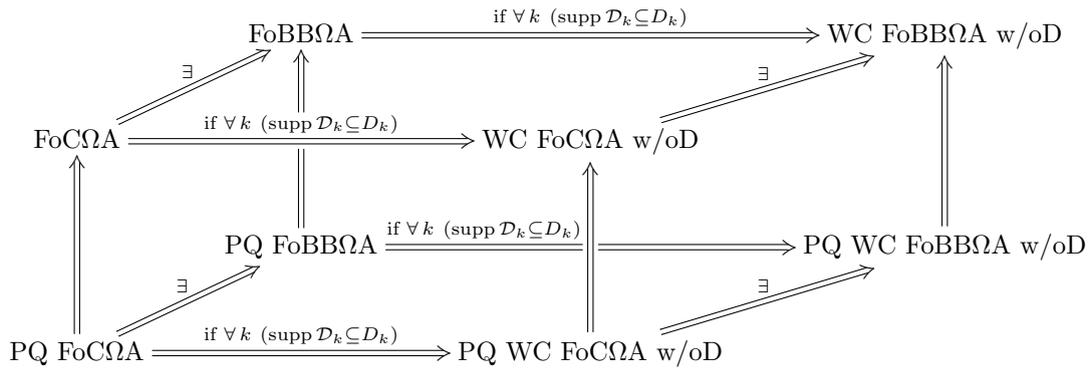
\begin{figure}[htb]
\[
\xymatrix{&\text{FoBB$\Omega$A}\impl[rr]^-\Dk&&
\text{WC FoBB$\Omega$A w/oD}\\\text{FoC$\Omega$A}
\impl[ur]^-\exists\impl[rr]^-\Dk&&
\text{WC FoC$\Omega$A w/oD}\impl[ur]^-\exists&\\
&\text{PQ FoBB$\Omega$A}\impl[uu]|(.55)*\txt{\\\hole}
\impl'[r]^-\Dk[rr]&&\text{PQ WC FoBB$\Omega$A w/oD}
\impl[uu]\\\text{PQ FoC$\Omega$A}\impl[uu]\impl[ur]^-\exists
\impl[rr]^-\Dk&&\text{PQ WC FoC$\Omega$A w/oD}
\impl[uu]\impl[ur]^-\exists&}
\]
\caption{Relations between the types of weak pseudo-freeness
defined in Section~\ref{s:wpsfreefams}. The abbreviations
PQ, WC, FoC$\Omega$A, FoBB$\Omega$A, and w/oD stand for
Post-Quantum, Worst-Case, (weakly pseudo-free) Family of
Computational $\Omega$-Algebras, (weakly pseudo-free) Family
of Black-Box $\Omega$-Algebras, and without Distributions,
respectively. For brevity, we do not write an abbreviation
for Weakly Pseudo-Free in the diagram. Weak pseudo-freeness
of any type means weak pseudo-freeness of this type in $\cl
V$ with respect to $(\pd D_k\st k\in K)$ (or $(D_k\st k\in
K)$ in the worst-case setting) and~$\sigma$. A horizontal
double-line arrow from $Y$ to $Z$ labeled ``$\Dk$'' means
that if $((H_d,\pd H_d)\st d\in D)$ is a family of type $Y$
and $\supp\pd D_k\subseteq D_k$ for all $k\in K$, then
$(H_d\st d\in D)$ is a family of type $Z$ (see
Remark~\ref{r:wpsfreeiswcwpsfree}). Furthermore, a vertical
unlabeled double-line arrow from $Y$ to $Z$ means that any
family of type $Y$ is also a family of type $Z$ (see
Remark~\ref{r:pqwpsfreeiswpsfree}). Finally, an oblique
double-line arrow from $Y$ to $Z$ labeled ``$\exists$''
means that if there exists a family of type $Y$, then there
exists a family of type $Z$ (see
Proposition~\ref{p:exwpffocOaimplexwpffobbOa}).}
\label{f:relations}
\end{figure}

\subsection{Weak Pseudo-Freeness in the Variety Generated by
the \headPsi-Reducts of All \headOmega-Algebras in~\headclV}
\label{ss:wpffofreducts}

Let $\Psi$ be a subset of $\Omega$ and let $H$ be an
$\Omega$-algebra. Then the $\Psi$-algebra obtained from $H$
by omitting the fundamental operations associated with the
symbols in $\Omega\setminus\Psi$ is called the
\emph{$\Psi$-reduct} of $H$ (or the \emph{reduct} of $H$ to
$\Psi$). We denote the $\Psi$-reduct of $H$ by $H|_\Psi$.
The $\Omega$-algebra $H$ is said to be an \emph{expansion}
of $H|_\Psi$ to~$\Omega$. Furthermore, $\cl V|_\Psi$ denotes
the variety of $\Psi$-algebras generated by the
$\Psi$-reducts of all $\Omega$-algebras in~$\cl V$. In other
words, $\cl V|_\Psi$ is the variety of $\Psi$-algebras
defined by the set of all identities over $\Psi$ that hold
in $\cl V$ (actually, in $G|_\Psi$ for every $G\in\cl V$).
Clearly, $\cl V|_\Psi$ is nontrivial if and only if $\cl V$
is nontrivial.

The $\Omega$-algebra $H$ is called an \emph{expanded group}
if there exists a set $\Gamma\subseteq\Omega$ of group
operation symbols such that $H|_\Gamma$ is a group. When it
comes to classes of expanded groups, we assume that this set
$\Gamma$ is the same for all expanded groups in the class.
Thus, $\cl V$ is said to be a \emph{variety of expanded
groups} if $\Omega$ contains a set $\Gamma$ of group
operation symbols such that $H|_\Gamma$ is a group for all
$H\in\cl V$ (or, equivalently, $\cl V|_\Gamma$ is a variety
of groups).

Lemma~\ref{l:freegensys} implies that the subalgebra of
$F_\infty(\cl V)|_\Psi$ generated by $a_1,a_2,\dotsc$ is a
$\cl V|_\Psi$-free $\Psi$-algebra freely generated by this
system. So we choose this $\Psi$-algebra as $F_\infty(\cl
V|_\Psi)$. Similarly, we assume that for any $m\in\N$,
$F_m(\cl V|_\Psi)$ is the subalgebra of $F_m(\cl V)|_\Psi$
generated by $a_1,\dots,a_m$. In particular, $F_\infty(\cl
V|_\Psi)\subseteq F_\infty(\cl V)$ and $F_m(\cl
V|_\Psi)\subseteq F_m(\cl V)$ for all $m\in\N$.

In the next proposition, we assume that an $\Omega$-algebra
$H_d\subseteq\B^*$ is assigned to each $d\in D$. When
necessary, we denote by $\pd H_d$ a probability distribution
on the (necessarily nonempty) $\Omega$-algebra $H_d$ for
every $d\in D$. Cf.\ Subsections~\ref{ss:standmod}
and~\ref{ss:bbOamod}.

\begin{proposition}
\label{p:wpffofreducts}
Suppose $S$ is a subset of $\dom\sigma$ such that
$\sigma(S)=F_\infty(\cl V|_\Psi)$. Let $\fm H=((H_d,\pd
H_d)\st d\in D)$ (resp., $\fm H=(H_d\st d\in D)$) and $\fm
H'=((H_d|_\Psi,\pd H_d)\st d\in D)$ (resp., $\fm
H'=(H_d|_\Psi\st d\in D)$). Then the following statements
hold:
\begin{roenum}
\item\label{i:wpffofreducts:comp} If $\fm H$ is a family of
computational $\Omega$-algebras in $\cl V$ with (resp.,
without) distributions, then $\fm H'$ is a family of
computational $\Psi$-algebras in $\cl V|_\Psi$ with (resp.,
without) distributions.

\item\label{i:wpffofreducts:bb} If $\fm H$ is a family of
black-box $\Omega$-algebras in $\cl V$ with (resp., without)
distributions, then $\fm H'$ is a family of black-box
$\Psi$-algebras in $\cl V|_\Psi$ with (resp., without)
distributions.

\item\label{i:wpffofreducts:wpsfree} If $\fm H$ is a weakly
pseudo-free or post-quantum weakly pseudo-free (resp.,
worst-case weakly pseudo-free or post-quantum worst-case
weakly pseudo-free) family of computational or black-box
$\Omega$-algebras (with (resp., without) distributions) in
$\cl V$ with respect to $(\pd D_k\st k\in K)$ (resp.,
$(D_k\st k\in K)$) and $\sigma$, then $\fm H'$ satisfies the
same weak pseudo-freeness condition in $\cl V|_\Psi$ with
respect to $(\pd D_k\st k\in K)$ (resp., $(D_k\st k\in K)$)
and~$\sigma|_S$.
\end{roenum}
\end{proposition}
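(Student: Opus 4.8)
The plan is to dispose of items~\ref{i:wpffofreducts:comp} and~\ref{i:wpffofreducts:bb} by direct inspection of the definitions, and to reduce item~\ref{i:wpffofreducts:wpsfree} to a single set-theoretic inclusion between the relevant $\Lambda$-sets. For item~\ref{i:wpffofreducts:comp}, I would check the clauses of Definitions~\ref{d:focOawod} and~\ref{d:focOa} for $\fm H'$: since $H_d|_\Psi$ has the same carrier as $H_d$, the same polynomial $\eta$ works; for each $\omega\in\Psi\subseteq\Omega$ the deterministic polynomial-time algorithm computing $\omega$ in $H_d$ also computes it in $H_d|_\Psi$; the ensemble $(\pd H_d\st d\in D)$ is unchanged, hence still polynomial-time samplable; and $H_d|_\Psi\in\cl V|_\Psi$ by the very definition of $\cl V|_\Psi$ as the variety generated by the $\Psi$-reducts of $\Omega$-algebras in $\cl V$. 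Item~\ref{i:wpffofreducts:bb} is even shorter, since Definition~\ref{d:fobbOawoawd} only requires a function $\xi$ and a polynomial $\eta$ with $H_d\subseteq\B^{\xi(d)}$ and $\xi(d)\le\eta(\acl d)$, which carry over verbatim because the carriers are unchanged, while membership in $\cl V|_\Psi$ is as above.

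For item~\ref{i:wpffofreducts:wpsfree}, the key step is the inclusion
\[
\Lambda(H|_\Psi,\cl V|_\Psi,\sigma|_S,g)\subseteq\Lambda(H,\cl V,\sigma,g)
\]
for every $H\in\cl V$ and every $g\in H^m$ with $m\in\N\setminus\{0\}$. To see this, take $(t,u)$ in the left-hand side. Then $t,u\in S\subseteq\dom\sigma$, and since $\sigma|_S$ agrees with $\sigma$ on $S$ we have $\sigma(t),\sigma(u)\in F_m(\cl V|_\Psi)\subseteq F_m(\cl V)$; moreover $\sigma(t)\ne\sigma(u)$ as elements of $F_\infty(\cl V)$, because $F_\infty(\cl V|_\Psi)\subseteq F_\infty(\cl V)$; and $\sigma(t)(g)=\sigma(u)(g)$ in $H$, because the fundamental operations of $H|_\Psi$ are inherited from $H$, so evaluating an element of $F_m(\cl V|_\Psi)\subseteq F_m(\cl V)$ at $g$ (note $(H|_\Psi)^m=H^m$) gives the same result in $H$ and in $H|_\Psi$. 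Hence $(t,u)\in\Lambda(H,\cl V,\sigma,g)$.

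Granting this inclusion, I would argue the contrapositive uniformly over the four weak-pseudo-freeness types. Suppose a polynomial $\pi$ (and, in the black-box cases, a polynomial $\tau$) together with a probabilistic, quantum, or black-box (quantum) algorithm $A$ breaks the relevant weak pseudo-freeness of $\fm H'$ with respect to $(\pd D_k\st k\in K)$ (or $(D_k\st k\in K)$) and $\sigma|_S$. Because $H_d|_\Psi$ and $H_d$ share their carrier and their distribution $\pd H_d$, the inputs $(1^k,d,g)$ (and, where applicable, the extra random elements $r$) fed to $A$ are legitimate inputs for a same-type adversary against $\fm H$; in the black-box cases one obtains a genuine black-box $\Omega$-algebra (quantum) algorithm by forwarding each oracle query $(\omega,h_1,\dots,h_{\ar\omega})$ with $\omega\in\Psi\subseteq\Omega$ to the $\Omega$-oracle (resp.\ quantum $\Omega$-oracle) for $H_d$, whose answer agrees with that of the $\Psi$-oracle for $H_d|_\Psi$ and which, in the quantum case, respects the usage constraint of Definition~\ref{d:bbOaqalg} since $E_{H_d|_\Psi,\omega}=E_{H_d,\omega}$. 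By the displayed inclusion, any output lying in $\Lambda(H_{\rv d}|_\Psi,\cl V|_\Psi,\sigma|_S,\rv g)$ also lies in $\Lambda(H_{\rv d},\cl V,\sigma,\rv g)$ — and likewise the worst-case minimum is taken over the same set $(H_d|_\Psi)^{\pi(k)}=H_d^{\pi(k)}$ — so the transferred adversary breaks the corresponding weak pseudo-freeness of $\fm H$, a contradiction; the distribution-free (worst-case) variants go through identically with $\pd H_d$ and $r$ omitted.

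I expect the only real difficulty to be bookkeeping: keeping the four types (computational vs.\ black-box, classical vs.\ quantum, average-case vs.\ worst-case, with vs.\ without distributions) aligned at once, and verifying that the oracle-forwarding simulation in the black-box variants stays polynomial-time and respects the structural constraints on black-box (quantum) algorithms. The one genuinely mathematical point — that passing to the $\Psi$-reduct neither identifies distinct elements of $F_m(\cl V|_\Psi)$ nor alters evaluations at tuples from the carrier — is exactly the content of the $\Lambda$-inclusion and follows from $F_\infty(\cl V|_\Psi)\subseteq F_\infty(\cl V)$ together with the inheritance of fundamental operations by reducts.
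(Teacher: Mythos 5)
Your proposal is correct and follows essentially the same route as the paper: the paper's proof also dismisses items (i) and (ii) as straightforward, rests item (iii) on the inclusion $\Lambda(H|_\Psi,\cl V|_\Psi,\sigma|_S,g)\subseteq\Lambda(H,\cl V,\sigma,g)$, and notes that a black-box $\Psi$-algebra (quantum) algorithm can be regarded as a black-box $\Omega$-algebra (quantum) algorithm. Your write-up merely fills in the details the paper leaves implicit (the paper additionally remarks, without using it, that the inclusion is in fact the equality $\Lambda(H|_\Psi,\cl V|_\Psi,\sigma|_S,g)=\Lambda(H,\cl V,\sigma,g)\cap S^2$).
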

\begin{proof}
Statements~\ref{i:wpffofreducts:comp}
and~\ref{i:wpffofreducts:bb} can be proved
straightforwardly. Suppose $H\in\cl V$ and $g\in H^m$, where
$m\in\N\setminus\{0\}$. Then it is easy to show that
$\Lambda(H|_\Psi,\cl
V|_\Psi,\sigma|_S,g)\subseteq\Lambda(H,\cl V,\sigma,g)$.
(Note that $\Lambda(H|_\Psi,\cl
V|_\Psi,\sigma|_S,g)=\Lambda(H,\cl V,\sigma,g)\cap S^2$, but
we do not need this fact.) Furthermore, a black-box
$\Psi$-algebra algorithm can be considered as a black-box
$\Omega$-algebra algorithm. The same holds for quantum
algorithms. These observations imply
statement~\ref{i:wpffofreducts:wpsfree}.
\end{proof}

\begin{remark}
\label{r:appltoSLP}
In particular, Proposition~\ref{p:wpffofreducts} can be
applied to the case where $\sigma=\SLP$ and $S$ is the set
of all straight-line programs over~$\Psi$. For this set $S$,
we have $\mathord{\SLP}|_S=\SLP_{\cl V|_\Psi}$.
\end{remark}

\section{Some Polynomial-Time Black-Box Group Quantum
Algorithms}
\label{s:ptbbgrqalgs}

In this section, we assume that $\Omega$ is a set of group
operation symbols and $\cl V$ is a variety of groups. We
prove that if $\cl V$ is nontrivial, then there exists a
polynomial-time black-box group quantum algorithm $B$ such
that for any black-box group $G\in\cl V$ and any $g\in G^m$
with $m>\log_2\acl G$, we have $\Prob{B^G(g)\in\Lambda(G,\cl
V,\SLP,g)}\ge\epsilon$, where $\epsilon$ is a positive
constant. See Lemmas~\ref{l:ProutBinLambdainfexp}
and~\ref{l:ProutBinLambdantrnall} below.

\subsection{The Case Where \headclV\ Has Infinite Exponent}

Throughout this subsection, we assume that the variety $\cl
V$ is of infinite exponent and that, given
$s\in\N\setminus\{0\}$, one can compute $\rep{a_1^s}\sigma$
in polynomial time. Of course, $\SLP$ satisfies the latter
assumption.

In the next lemma, we denote by $A$ a polynomial-time
black-box group quantum algorithm such that for any
black-box group $G\in\cl V$ and any $g\in G$,
\begin{equation}
\label{e:condAinfexp}
\Prob{A^G(g)=s\in\N\setminus\{0\}\text{
s.t.\ }g^s=1}\ge\epsilon,
\end{equation}
where $\epsilon$ is a positive constant. Such an algorithm
exists. For example, Shor's order-finding algorithm
(see~\cite[Section~5]{Sho97}, \cite[Subsection~5.3.1]{NC10},
or~\cite[Subsections~13.4--13.6]{KSV02}) can be easily
converted to a polynomial-time black-box group quantum
algorithm that satisfies the required condition.

\begin{lemma}
\label{l:ProutBinLambdainfexp}
There exists a polynomial-time black-box group quantum
algorithm $B$ such that for any black-box group $G\in\cl V$
and any $g\in G$,
\[
\Prob{B^G(g)\in\Lambda(G,\cl V,\sigma,g)}\ge\epsilon,
\]
where $\epsilon$ is the same positive constant as
in~\eqref{e:condAinfexp}.
\end{lemma}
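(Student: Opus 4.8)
The plan is to build $B$ from the given order-finding algorithm $A$ by the following simple reduction. On input a black-box group $G\in\cl V$ and an element $g\in G$, the algorithm $B$ first runs $A^G(g)$ to obtain a value $s$. With probability at least $\epsilon$ (by~\eqref{e:condAinfexp}), the returned $s$ is a positive integer with $g^s=1$. Then $B$ computes $t=\rep{a_1^s}\sigma$ and $u=\rep{a_1^0}\sigma$ (equivalently, a $\sigma$-representation of the identity of $F_1(\cl V)$, which is $a_1^0$ written as, say, the straight-line program computing the neutral element), and outputs the pair $(t,u)$. Since $A$ runs in polynomial time and, by assumption, $\rep{a_1^s}\sigma$ is computable in polynomial time from $s$ (and the fixed identity representation is computable outright), $B$ is a polynomial-time black-box group quantum algorithm.

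The key point is then to verify that whenever $A$ succeeds, the output $(t,u)$ lies in $\Lambda(G,\cl V,\sigma,g)$. Here we are in the degenerate case $m=1$, so $\Lambda(G,\cl V,\sigma,g)$ is the set of pairs $(t,u)\in(\dom\sigma)^2$ with $\sigma(t),\sigma(u)\in F_1(\cl V)$, $\sigma(t)\ne\sigma(u)$, and $\sigma(t)(g)=\sigma(u)(g)$. We have $\sigma(t)=a_1^s$ and $\sigma(u)=a_1^0=1$ in $F_1(\cl V)$, and both lie in $F_1(\cl V)$ by construction. The evaluation identities $\sigma(t)(g)=g^s=1$ and $\sigma(u)(g)=1$ hold in $G$, so the third condition is satisfied. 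The only thing left is $\sigma(t)\ne\sigma(u)$, i.e.\ $a_1^s\ne1$ in $F_1(\cl V)$; this is exactly where the hypothesis that $\cl V$ has infinite exponent enters. By the group-theoretic preliminaries, the exponent of $\cl V$ equals $\acl{F_1(\cl V)}$, so infinite exponent means $F_1(\cl V)$ is infinite, hence the powers $a_1^s$ ($s\in\N$) are pairwise distinct, and in particular $a_1^s\ne1$ for every $s\ge1$. (Equivalently, $a_1^s=1$ in $F_1(\cl V)$ would force the identity $z_1^s=1$ to hold in $\cl V$ by Lemma~\ref{l:freegensys}, contradicting infinite exponent.) Thus $(t,u)\in\Lambda(G,\cl V,\sigma,g)$, and the overall success probability of $B$ is at least $\epsilon$.

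I do not expect any genuine obstacle here: the lemma is essentially a packaging statement turning "find a multiple of the order of $g$" into "exhibit a representation of a nontrivial relation satisfied by $g$", and all the work is in checking the four clauses of the definition of $\Lambda$. The one substantive ingredient is the equivalence between infinite exponent and $F_1(\cl V)$ being infinite, which is already recorded in the paper; everything else is bookkeeping about $\sigma$-representations and the polynomial-time computability assumption stated at the start of the subsection. If anything requires a word of care, it is only making precise that $u=\rep{1}\sigma$ can be chosen once and for all (it does not depend on the input), so that producing it costs no oracle queries and constant time.
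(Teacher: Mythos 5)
Your proposal is correct and matches the paper's proof essentially verbatim: run the order-finding algorithm $A^G(g)$, output $(\rep{a_1^s}\sigma,\rep{1}\sigma)$, and use the infinite-exponent hypothesis to conclude $a_1^s\ne1$ in $F_1(\cl V)$ so that the pair is a genuine nontrivial relation. The only cosmetic difference is that the paper has $B$ explicitly verify $g^s=1$ before outputting (failing otherwise), which is immaterial for the claimed lower bound on the success probability.
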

\begin{proof}
Let $B$ be a polynomial-time black-box group quantum
algorithm such that for any black-box group $G\in\cl V$ and
any $g\in G$, $B$ on input $g$ with access to a quantum
group oracle for $G$ proceeds as follows:
\begin{arenum}
\item Run $A^G$ on input~$g$.

\item If the output is a positive integer $s$ satisfying
$g^s=1$, then return $(\rep{a_1^s}\sigma,\rep1\sigma)$.
Otherwise the algorithm $B$ fails.
\end{arenum}
Suppose $G$ and $g$ are as in the statement of the lemma.
Then it is easy to see that $B^G(g)\in\Lambda(G,\cl
V,\sigma,g)$ if and only if $A^G(g)=s$, where
$s\in\N\setminus\{0\}$ and $g^s=1$. (Note that $a_1^s\ne1$
for all $s\in\N\setminus\{0\}$ because $\cl V$ has infinite
exponent.) Hence,
\[
\Prob{B^G(g)\in\Lambda(G,\cl
V,\sigma,g)}=\Prob{A^G(g)=s\in\N\setminus\{0\}\text{
s.t.\ }g^s=1}\ge\epsilon.\qed
\]
\renewcommand\qed{}
\end{proof}

\subsection{The Case Where \headclV\ Is Nontrivial and
Is Not the Variety of All Groups}

Throughout this subsection, we assume that $\cl V$ is
nontrivial and is not the variety of all groups. Moreover,
all results of this subsection depend on the CFSG unless
$\cl V$ is solvable.

Let $H$ be a finite group. As in~\cite{BB93}
and~\cite{IMS03}, we denote by $\nu(H)$ the smallest
$n\in\N\setminus\{0\}$ such that all nonabelian composition
factors of $H$ can be embedded in the symmetric group of
degree~$n$. If all composition factors of $H$ are abelian
(i.e., $H$ is solvable), then $\nu(H)=1$.

The \emph{constructive membership problem} for subgroups of
$H$ is defined as follows: Given $g_1,\dots,g_m,h\in H$
(where $m\in\N$), either find a straight-line program
computing $h$ from $g_1,\dots,g_m$ (if
$h\in\alg{g_1,\dots,g_m}$) or report that no such
straight-line program exists (if
$h\notin\alg{g_1,\dots,g_m}$). Lemma~\ref{l:shortSLP}
implies that if $h\in\alg{g_1,\dots,g_m}$, then there exists
such a straight-line program of length at most
$(1+\log_2\acl H)^2$.

\begin{remark}
\label{r:ptqalgforCMP}
Ivanyos, Magniez, and Santha~\cite[Theorem~5]{IMS03} proved
the existence of a black-box group quantum algorithm that
solves the constructive membership problem for subgroups of
any given black-box group $G$ in time polynomial in the
input length${}+\nu(G)$ with success probability at least
$\epsilon$, where $\epsilon$ is a constant satisfying
$1/2<\epsilon<1$. The proof of this uses a deep algorithmic
result of Beals and Babai~\cite[Theorem~1.2]{BB93}, which in
turn depends on the CFSG\@. Note that in~\cite{IMS03},
quantum group oracles for black-box groups and straight-line
programs for groups are defined slightly differently than in
the present paper. However, this does not matter for us.
Indeed, by Remark~\ref{r:qugroracle}, a quantum group oracle
in the sense of~\cite{IMS03} for an arbitrary black-box
group can be efficiently implemented using a quantum group
oracle in the sense of the present paper for that black-box
group, and vice versa. Also, if $g_1,\dots,g_m$ are elements
of a group, where $m\in\N\setminus\{0\}$, and
$h\in\alg{g_1,\dots,g_m}$, then a straight-line program
computing $h$ from $g_1,\dots,g_m$ in the sense
of~\cite{IMS03} can be efficiently converted to a
straight-line program computing $h$ from $g_1,\dots,g_m$ in
the sense of the present paper, and vice versa.

Furthermore, by a result of Jones~\cite{Jon74} together with
the CFSG, there are only finitely many (up to isomorphism)
nonabelian finite simple groups in every variety of groups
different from the variety of all groups. (This gives a
negative answer to Problem~23 in~\cite{Neu67}.) Hence for
any finite group $H\in\cl V$, $\nu(H)$ is upper bounded by a
constant because $\cl V$ is not the variety of all groups.
Thus, the above-mentioned black-box group quantum algorithm
for the constructive membership problem runs in polynomial
time whenever the given black-box group is in~$\cl V$.

Note that if $\cl V$ is solvable, then $\nu(H)=1$ for every
finite group $H\in\cl V$ and the above-mentioned algorithm
of Ivanyos, Magniez, and Santha does not deal with
nonabelian finite simple groups during a computation in a
black-box group in~$\cl V$. Therefore in this case, we do
not need the CFSG for our purposes.
\end{remark}

In the next lemma, we denote by $A$ a polynomial-time
black-box group quantum algorithm such that for any
black-box group $G\in\cl V$, any $g_1,\dots,g_m\in G$ (where
$m\in\N$), and any $h\in\alg{g_1,\dots,g_m}$,
\begin{equation}
\label{e:condAntrnall}
\Prob{A^G(g_1,\dots,g_m,h)\text{ is a straight-line program
computing }h\text{ from }g_1,\dots,g_m}\ge\epsilon,
\end{equation}
where $\epsilon$ is a positive constant. By
Remark~\ref{r:ptqalgforCMP}, such an algorithm exists.

\begin{lemma}
\label{l:ProutBinLambdantrnall}
There exists a polynomial-time black-box group quantum
algorithm $B$ such that for any black-box group $G\in\cl V$
and any $g\in G^m$ with $m>\log_2\acl G$,
\[
\Prob{B^G(g)\in\Lambda(G,\cl V,\SLP,g)}\ge\epsilon,
\]
where $\epsilon$ is the same positive constant as
in~\eqref{e:condAntrnall}.
\end{lemma}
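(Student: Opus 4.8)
The plan is to build the algorithm $B$ on top of the algorithm $A$ from~\eqref{e:condAntrnall}, which solves the constructive membership problem, and to exploit the hypothesis $m > \log_2\acl{G}$ to guarantee the existence of a nontrivial relation. First I would note that since $G\in\cl V$ and $\cl V$ is not the variety of all groups, there are only finitely many nonabelian composition factors occurring in finite groups of $\cl V$ (Jones's result plus the CFSG, as recalled in Remark~\ref{r:ptqalgforCMP}), so $\nu(G)$ is bounded by a constant and hence $A$ runs in genuine polynomial time on inputs from $\cl V$. The algorithm $B$, on input $g=(g_1,\dots,g_m)$ with access to a quantum group oracle for $G$, proceeds as follows: it computes $h := \SLP$-interprets nothing yet, but rather it picks the last generator (or any fixed one, say $g_m$) and runs $A^G$ on input $(g_1,\dots,g_{m-1},g_m)$ to obtain, with probability at least $\epsilon$, a straight-line program $u$ computing $g_m$ from $g_1,\dots,g_{m-1}$. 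Then $B$ returns the pair $(u, (m))$, i.e., $\bigl(u,\rep{a_m}{\SLP}\bigr)$, where $(m)$ is the one-term straight-line program for the $m$th generator.

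The key point is correctness of this output as an element of $\Lambda(G,\cl V,\SLP,g)$. Write $v = \SLP(u)\in F_{m-1}(\cl V)\subseteq F_m(\cl V)$ and $w = a_m\in F_m(\cl V)$. By construction $v(g) = g_m = w(g)$ in $G$, so the third defining condition of $\Lambda$ holds. For the inequality $v\ne w$ in $F_m(\cl V)$: if we had $v = w = a_m$ in $F_m(\cl V)$, then since $v\in F_{m-1}(\cl V)$, the element $a_m$ would lie in the subalgebra $\alg{a_1,\dots,a_{m-1}}$ of $F_m(\cl V)$; but in a nontrivial variety $a_1,\dots,a_m$ are distinct free generators, and it is a standard fact that $a_m\notin\alg{a_1,\dots,a_{m-1}}$ (this follows, e.g., from Lemma~\ref{l:freegensys} by mapping to a suitable algebra in $\cl V$ that separates $a_m$ from the subalgebra generated by the others — such an algebra exists because $\cl V$ is nontrivial, so $F_m(\cl V)$ has at least two elements and the projection-type homomorphisms are available). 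Hence $v\ne w$, and $(u,(m))\in\Lambda(G,\cl V,\SLP,g)$.

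The remaining obligation is to ensure that the hypothesis of $A$ is actually met, namely that $g_m\in\alg{g_1,\dots,g_{m-1}}$ — and here is where I expect the only real subtlety to lie, and where the bound $m>\log_2\acl{G}$ enters. In fact $g_m$ need not lie in $\alg{g_1,\dots,g_{m-1}}$ for an arbitrary tuple; the fix is that by the Reachability Lemma (Lemma~\ref{l:shortSLP}), or more simply by a counting argument, the ascending chain $\alg{g_1}\le\alg{g_1,g_2}\le\dots\le\alg{g_1,\dots,g_m}$ of subgroups of $G$ has length at most $\log_2\acl{G} < m$, since each proper inclusion at least doubles the order. Therefore some inclusion $\alg{g_1,\dots,g_{i-1}}\le\alg{g_1,\dots,g_i}$ is not proper, i.e., $g_i\in\alg{g_1,\dots,g_{i-1}}$ for some $i\in\{2,\dots,m\}$ (with the convention that $\alg{\;}=\{1\}$, so if $g_1=1$ we may take $i=1$ with the empty generating set). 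So $B$ should instead, for each $i$ from $1$ to $m$, run $A^G$ on $(g_1,\dots,g_{i-1},g_i)$ and check whether the returned straight-line program really computes $g_i$ (which $B$ can verify using the group oracle); it outputs $\bigl(u_i,\rep{a_i}{\SLP}\bigr)$ for the first $i$ that succeeds. For the index $i$ witnessing $g_i\in\alg{g_1,\dots,g_{i-1}}$, the call succeeds with probability at least $\epsilon$, so $\Prob{B^G(g)\in\Lambda(G,\cl V,\SLP,g)}\ge\epsilon$; and the argument of the previous paragraph, applied with $a_i$ in place of $a_m$, shows every successful output lies in $\Lambda(G,\cl V,\SLP,g)$. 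The total running time is $m$ times a polynomial in the input length plus $\nu(G)=O(1)$, hence polynomial. This completes the plan; the main obstacle — guaranteeing a relation exists at all — is precisely handled by the pigeonhole/chain-length argument against $m>\log_2\acl{G}$.
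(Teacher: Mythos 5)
Your proposal is correct and follows essentially the same route as the paper: iterate $i=1,\dots,m$, run the constructive-membership algorithm $A^G$ on $(g_1,\dots,g_i)$, verify and output $(u,(i))$ at the first success, with the doubling/pigeonhole argument from $m>\log_2\acl{G}$ guaranteeing a witness index and the nontriviality of $\cl V$ giving $\SLP(u)\ne a_i$ in $F_m(\cl V)$. The paper's proof is the same, only phrased more tersely (it picks the smallest $j$ with $g_j\in\alg{g_1,\dots,g_{j-1}}$ and notes earlier iterations cannot pass verification).
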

\begin{proof}
Let $B$ be a polynomial-time black-box group quantum
algorithm such that for any black-box group $G\in\cl V$ and
any $g=(g_1,\dots,g_m)$, where $m\in\N$ and
$g_1,\dots,g_m\in G$, $B$ on input $g$ with access to a
quantum group oracle for $G$ proceeds as follows:
\begin{arenum}
\item For each $i\in\{1,\dots,m\}$ (in ascending order), run
$A^G$ on input $(g_1,\dots,g_i)$. If the output is a
straight-line program $u$ that computes $g_i$ from
$g_1,\dots,g_{i-1}$, then return $(u,(i))$ and stop. (The
straight-line program $(i)$ computes the $i$th element of
the input sequence.)

\item If this point is reached, then the algorithm $B$
fails.
\end{arenum}
Suppose $G$ and $g=(g_1,\dots,g_m)$ are as in the statement
of the lemma (in particular, $m>\log_2\acl G$). If
$g_i\notin\alg{g_1,\dots,g_{i-1}}$ for all
$i\in\{1,\dots,m\}$, then an easy induction on $i$ shows
that $\acl{\alg{g_1,\dots,g_i}}\ge2^i$ for every
$i\in\{0,\dots,m\}$. In particular,
$2^m\le\acl{\alg{g_1,\dots,g_m}}\le\acl G$, which
contradicts $m>\log_2\acl G$. Hence
$g_j\in\alg{g_1,\dots,g_{j-1}}$ for some
$j\in\{1,\dots,m\}$. Choose the smallest such~$j$. Assume
that $A^G(g_1,\dots,g_j)=u$, where $u$ is a straight-line
program computing $g_j$ from $g_1,\dots,g_{j-1}$; this holds
with probability at least~$\epsilon$. Let $v=\SLP(u)$, i.e.,
$v$ is the element of $F_{j-1}(\cl V)$ computed from
$a_1,\dots,a_{j-1}$ by~$u$. Then
$v=v(a)=v(a_1,\dots,a_{j-1})\ne a_j$ (because $\cl V$ is
nontrivial) and $v(g)=v(g_1,\dots,g_{j-1})=g_j$. This shows
that $B^G(g)=(u,(j))=(\rep
v\SLP,\rep{a_j}\SLP)\in\Lambda(G,\cl V,\SLP,g)$ with
probability at least~$\epsilon$.
\end{proof}

\section{Main Result}
\label{s:mainres}

\begin{theorem}
\label{t:tharenopqwpffoeg}
Assume that $\cl V$ is a nontrivial variety of expanded
groups. Then there are no families of any of the following
types:
\begin{roenum}
\item\label{i:tharenopqwpffoeg:pqwpffocOa} post-quantum
weakly pseudo-free families of computational
$\Omega$-algebras in $\cl V$ with respect to $(\pd D_k\st
k\in K)$ and~$\SLP$,

\item post-quantum worst-case weakly pseudo-free families of
computational $\Omega$-algebras (without distributions) in
$\cl V$ with respect to $(D_k\st k\in K)$ and~$\SLP$,

\item\label{i:tharenopqwpffoeg:pqwpffobbOa} post-quantum
weakly pseudo-free families of black-box $\Omega$-algebras
in $\cl V$ with respect to $(\pd D_k\st k\in K)$ and~$\SLP$,

\item\label{i:tharenopqwpffoeg:pqwcwpffobbOa} post-quantum
worst-case weakly pseudo-free families of black-box
$\Omega$-algebras (without distributions) in $\cl V$ with
respect to $(D_k\st k\in K)$ and~$\SLP$.
\end{roenum}
\end{theorem}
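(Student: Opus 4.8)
The plan is to follow exactly the outline sketched in the introduction, reducing the full statement to the case of black-box groups and then invoking the quantum algorithms of Section~\ref{s:ptbbgrqalgs}. First I would observe that, by the diagram in Figure~\ref{f:relations} (Remarks~\ref{r:wpsfreeiswcwpsfree} and~\ref{r:pqwpsfreeiswpsfree} together with Proposition~\ref{p:exwpffocOaimplexwpffobbOa}), it suffices to rule out families of type~\ref{i:tharenopqwpffoeg:pqwcwpffobbOa}: the existence of a post-quantum weakly pseudo-free family of computational $\Omega$-algebras, or of a post-quantum worst-case weakly pseudo-free one, or of a post-quantum weakly pseudo-free family of black-box $\Omega$-algebras, would each imply the existence of a post-quantum worst-case weakly pseudo-free family of black-box $\Omega$-algebras without distributions in $\cl V$ with respect to $(D_k\st k\in K)$ and $\SLP$ (with a possibly modified $(D_k\st k\in K)$, obtained as the support sets of the relevant $(\pd D_k\st k\in K)$). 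So I would assume, for contradiction, that $\fm H=(H_d\st d\in D)$ is such a family, and derive a contradiction.

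Next, fix a set $\Gamma\subseteq\Omega$ of group operation symbols witnessing that $\cl V$ is a variety of expanded groups, so that $\cl V|_\Gamma$ is a nontrivial variety of groups. Apply Proposition~\ref{p:wpffofreducts}\ref{i:wpffofreducts:wpsfree} with $\Psi=\Gamma$ and $S$ the set of all straight-line programs over $\Gamma$ (using Remark~\ref{r:appltoSLP}, so that $\mathord{\SLP}|_S=\SLP_{\cl V|_\Gamma}$): the family $\fm H'=(H_d|_\Gamma\st d\in D)$ of $\Gamma$-reducts is then a post-quantum worst-case weakly pseudo-free family of black-box groups in $\cl V|_\Gamma$ with respect to $(D_k\st k\in K)$ and $\SLP_{\cl V|_\Gamma}$. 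Thus it suffices to prove that no such family exists in a nontrivial variety of groups, which lets me work entirely inside Section~\ref{s:ptbbgrqalgs}'s setting. Here I would split into two cases according to whether $\cl V|_\Gamma$ has infinite or finite exponent. If $\cl V|_\Gamma$ has infinite exponent, take the algorithm $B$ of Lemma~\ref{l:ProutBinLambdainfexp} (with $\sigma=\SLP_{\cl V|_\Gamma}$, which indeed allows computing $\rep{a_1^s}\sigma$ in polynomial time); if $\cl V|_\Gamma$ has finite exponent then it is in particular not the variety of all groups, so take the algorithm $B$ of Lemma~\ref{l:ProutBinLambdantrnall}. In either case $B$ is a polynomial-time black-box group quantum algorithm with $\Prob{B^G(g)\in\Lambda(G,\cl V|_\Gamma,\SLP_{\cl V|_\Gamma},g)}\ge\epsilon$ for a fixed constant $\epsilon>0$, for every black-box group $G\in\cl V|_\Gamma$ and every $g\in G^m$ with $m>\log_2\acl G$.

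Finally I would use $B$ to break the worst-case weak pseudo-freeness of $\fm H'$. Since $\fm H'$ is polynomially bounded, there is a polynomial $\eta$ with $H_d|_\Gamma\subseteq\B^{\le\eta(\acl d)}$, hence $\acl{H_d|_\Gamma}\le 2^{\eta(\acl d)+1}$ and $\log_2\acl{H_d|_\Gamma}\le\eta(\acl d)+1$; combined with $D_k\subseteq\B^{\le\theta(k)}$ this gives $\log_2\acl{H_d|_\Gamma}\le\eta(\theta(k))+1$ for all $d\in D_k$. Choose the polynomial $\pi$ with $\pi(k)>\eta(\theta(k))+1$ for all $k$. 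Then for every $k\in K$, every $d\in D_k$, and every $g\in(H_d|_\Gamma)^{\pi(k)}$ we have $\pi(k)>\log_2\acl{H_d|_\Gamma}$, so the hypothesis of the chosen lemma applies and $\Prob{B^{H_d|_\Gamma}(1^k,d,g)\in\Lambda(H_d|_\Gamma,\cl V|_\Gamma,\SLP_{\cl V|_\Gamma},g)}\ge\epsilon$ (the algorithm simply ignores its first two inputs $1^k$ and $d$ and runs $B$ on $g$). Taking the minimum over $d\in D_k$ and $g\in(H_d|_\Gamma)^{\pi(k)}$ still yields a quantity $\ge\epsilon$, a constant bounded away from $0$, which is not negligible — contradicting post-quantum worst-case weak pseudo-freeness of $\fm H'$. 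This contradiction proves the theorem; the CFSG is invoked precisely (and only) when, for every admissible choice of $\Gamma$, the variety $\cl V|_\Gamma$ has finite exponent but is nonsolvable, since that is where Remark~\ref{r:ptqalgforCMP} relies on it. The only mild subtlety — and the place I would be most careful — is bookkeeping the reductions of Figure~\ref{f:relations} so that passing from a "with distributions" family to its support-set index family $(D_k\st k\in K)$ is done compatibly with the hypotheses $\supp\pd D_k\subseteq D_k$ and $D_k\subseteq\B^{\le\theta(k)}$; none of this is deep, but it must be stated cleanly.
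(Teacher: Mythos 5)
Your proposal is correct and follows essentially the same route as the paper's own proof: the same reduction via Figure~\ref{f:relations} to type~\ref{i:tharenopqwpffoeg:pqwcwpffobbOa}, the same passage to $\Gamma$-reducts via Proposition~\ref{p:wpffofreducts} and Remark~\ref{r:appltoSLP}, and the same case split on the exponent of $\cl V|_\Gamma$ using Lemmas~\ref{l:ProutBinLambdainfexp} and~\ref{l:ProutBinLambdantrnall}. The only cosmetic difference is that in the infinite-exponent case the paper simply takes $\pi$ to be the constant polynomial $n\mapsto1$ (since Lemma~\ref{l:ProutBinLambdainfexp} needs only a single group element), whereas you use one large $\pi$ for both cases; this is harmless.
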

\begin{proof}
Choose a set $\Gamma\subseteq\Omega$ of group operation
symbols such that $\cl V|_\Gamma$ is a variety of groups.
Remark~\ref{r:wpsfreeiswcwpsfree} and
Proposition~\ref{p:exwpffocOaimplexwpffobbOa} (see also
Figure~\ref{f:relations}) show that it is sufficient to
prove the nonexistence of families of
type~\ref{i:tharenopqwpffoeg:pqwcwpffobbOa} (for families of
types~\ref{i:tharenopqwpffoeg:pqwpffocOa}
and~\ref{i:tharenopqwpffoeg:pqwpffobbOa}, we put
$D_k=\supp\pd D_k$ for all $k\in K$). Furthermore,
Proposition~\ref{p:wpffofreducts} and
Remark~\ref{r:appltoSLP} imply that for this it suffices to
prove the nonexistence of post-quantum worst-case weakly
pseudo-free families of black-box groups (without
distributions) in $\cl V|_\Gamma$ with respect to $(D_k\st
k\in K)$ and~$\SLP_{\cl V|_\Gamma}$. (Of course, the variety
$\cl V|_\Gamma$ is nontrivial.)

Suppose $(G_d\st d\in D)$ is a family of black-box groups
(without distributions) in $\cl V|_\Gamma$, where
$G_d\subseteq\B^{\xi(d)}$ for each $d\in D$ ($\xi\colon
D\to\N$). Let $B$ be a polynomial-time black-box group
quantum algorithm from either
Lemma~\ref{l:ProutBinLambdainfexp} if the exponent of $\cl
V|_\Gamma$ is infinite or
Lemma~\ref{l:ProutBinLambdantrnall} otherwise. Also, suppose
$\epsilon$ is the positive constant from that lemma. If the
exponent of $\cl V|_\Gamma$ is infinite, then let $\pi$ be
the constant polynomial $n\mapsto1$ ($n\in\N$). Otherwise
choose a polynomial $\pi$ such that $\xi(d)<\pi(k)$ for all
$k\in K$ and $d\in D_k$. Such a polynomial exists because
there are polynomials $\eta$ and $\theta$ satisfying
$\xi(d)\le\eta(\acl d)$ for all $d\in D$ and $\acl
d\le\theta(k)$ for all $k\in K$ and $d\in D_k$.

Let $C$ be a polynomial-time black-box group quantum
algorithm such that for any $k\in K$, $d\in D_k$, and $g\in
G_d^{\pi(k)}$, $C$ on input $(1^k,d,g)$ with access to a
quantum group oracle for $G_d$ runs $B^{G_d}$ on input $g$
and returns the output (if it exists). Suppose $k$, $d$, and
$g$ are as in the previous sentence. Note that if the
exponent of $\cl V|_\Gamma$ is finite, then
$\acl{G_d}\le2^{\xi(d)}$ and hence
$\pi(k)>\xi(d)\ge\log_2\acl{G_d}$. By either
Lemma~\ref{l:ProutBinLambdainfexp} (if the exponent of $\cl
V|_\Gamma$ is infinite) or
Lemma~\ref{l:ProutBinLambdantrnall} (otherwise), we have
\[
\Prob{C^{G_d}(1^k,d,g)\in\Lambda(G_d,\cl V|_\Gamma,\SLP_{\cl
V|_\Gamma},g)}\ge\epsilon.
\]
Therefore,
\[
\min_{d\in D_k\cs g\in
G_d^{\pi(k)}}\Prob{C^{G_d}(1^k,d,g)\in\Lambda(G_d,\cl
V|_\Gamma,\SLP_{\cl V|_\Gamma},g)}\ge\epsilon
\]
for all $k\in K$. This shows that $(G_d\st d\in D)$ is not
post-quantum worst-case weakly pseudo-free in $\cl
V|_\Gamma$ with respect to $(D_k\st k\in K)$ and~$\SLP_{\cl
V|_\Gamma}$. Thus, there are no post-quantum worst-case
weakly pseudo-free families of black-box groups (without
distributions) in $\cl V|_\Gamma$ with respect to $(D_k\st
k\in K)$ and~$\SLP_{\cl V|_\Gamma}$.
\end{proof}

Note that if the set $\Gamma$ in the proof of
Theorem~\ref{t:tharenopqwpffoeg} cannot be chosen so that
$\cl V|_\Gamma$ has infinite exponent or is solvable, then
the statement of this theorem depends on the CFSG\@.

\begin{remark}
\label{r:apploftheorem}
In particular, Theorem~\ref{t:tharenopqwpffoeg} can be
applied to nontrivial varieties of groups, rings, modules
and algebras over a finitely generated commutative
associative ring with $1$, near-rings, and, more generally,
groups with finitely many multiple operators.
See~\cite{Hig56} or \cite[Chapter~II, Section~2]{Cohn81} for
the definition of a group with multiple operators (also
known as a multi-operator group).
\end{remark}

\section{Conclusion and Directions for Future Research}
\label{s:conclanddirforfutres}

We have shown that in any nontrivial variety of expanded
groups, there are no post-quantum weakly pseudo-free
families with respect to $(\pd D_k\st k\in K)$ (or $(D_k\st
k\in K)$ in the worst-case setting) and $\SLP$, even in the
worst-case setting and/or the black-box model. In our
opinion, this is an additional motivation for studying
(weak) pseudo-freeness in varieties of $\Omega$-algebras
that in general are not expanded groups. In particular, it
would be interesting to explore (weakly) pseudo-free
families of semigroups, monoids, equasigroups, eloops, and
lattices. The terms ``equasigroup'' and ``eloop'' mean
``equationally definable quasigroup'' and ``equationally
definable loop,'' respectively; see \cite[Chapter~IV,
Section~1]{Cohn81} for definitions of these terms.

Here are some open questions for future research:
\begin{itemize}
\item Does the statement of Theorem~\ref{t:tharenopqwpffoeg}
hold if $\cl V$ is a nontrivial variety of semigroups,
monoids, equasigroups, eloops, or lattices?

\item Does Theorem~\ref{t:tharenopqwpffoeg} remain valid if
the families of computational and black-box
$\Omega$-algebras are polynomially bounded but do not
necessarily have unique representations of elements?

\item Do the types of weak pseudo-freeness defined in
Section~\ref{s:wpsfreefams} (and/or the respective types of
pseudo-freeness) have interesting properties?

\item Does there exist an exponential-size post-quantum
(weakly) pseudo-free family of computational groups in some
nontrivial variety $\cl V$ of groups under a standard
cryptographic assumption? We do not require this family to
be polynomially bounded or to have unique representations of
elements. Also, one may use any natural representation of
elements of the $\cl V$-free group by bit strings.

\item Theorem~4.2 in~\cite{Ano13} states that under the
general integer factoring intractability assumption, a
certain family (say, $\fm G$) of computational groups is
pseudo-free in the variety of all groups with respect to a
certain probability ensemble $(\pd E_k\st k\in K)$ and a
certain function~$\beta$. The family $\fm G$ has exponential
size, but is not polynomially bounded and does not have
unique representations of elements. Is $\fm G$ post-quantum
weakly pseudo-free in the variety of all groups with respect
to $(\pd E_k\st k\in K)$ and~$\beta$? The conjectured answer
is no.
\end{itemize}

\bibliographystyle{alpha}\bibliography{pseudo-free}

\newcommand{\etalchar}[1]{$^{#1}$}
\begin{thebibliography}{FHIS14b}

\bibitem[AKSY94]{AKSY94}
V.~A. Artamonov, A.~A. Klyachko, V.~M. Sidelnikov, and V.~V. Yashchenko.
\newblock Algebraic aspects of key generation systems.
\newblock In {\em Error Control, Cryptology, and Speech Compression
  (ECCSP~1993)}, volume 829 of {\em Lecture Notes in Computer Science}, pages
  1--5. Sprin\-ger, 1994.

\bibitem[Ano13]{Ano13}
M.~Anokhin.
\newblock Constructing a pseudo-free family of finite computational groups
  under the general integer factoring intractability assumption.
\newblock {\em Groups Complexity Cryptology}, 5(1):53--74, 2013.
\newblock Preliminary version: Electronic Colloquium on Computational
  Complexity (ECCC, \url{https://eccc.weizmann.ac.il/}), TR12-114, 2012.
  Erratum: Groups Complexity Cryptology, 11(2):133--134, 2019.

\bibitem[Ano17]{Ano17}
M.~Anokhin.
\newblock Pseudo-free families of finite computational elementary abelian
  $p$-groups.
\newblock {\em Groups Complexity Cryptology}, 9(1):1--18, 2017.
\newblock Preliminary version: Cryptology ePrint Archive
  (\url{https://eprint.iacr.org/}), Report~2015/1127, 2015.

\bibitem[Ano18]{Ano18}
M.~Anokhin.
\newblock A certain family of subgroups of $\mathbb{Z}_n^\star$ is weakly
  pseudo-free under the general integer factoring intractability assumption.
\newblock {\em Groups Complexity Cryptology}, 10(2):99--110, 2018.
\newblock Preliminary version: Cryptology ePrint Archive
  (\url{https://eprint.iacr.org/}), Report~2017/1131, 2017.

\bibitem[Ano21]{Ano21}
M.~Anokhin.
\newblock Pseudo-free families of computational universal algebras.
\newblock {\em Journal of Mathematical Cryptology}, 15(1):197--222, 2021.
\newblock Preliminary version: Cryptology ePrint Archive
  (\url{https://eprint.iacr.org/}), Report~2018/1178, 2018.

\bibitem[Ano22]{Ano22}
M.~Anokhin.
\newblock Pseudo-free families and cryptographic primitives.
\newblock {\em Journal of Mathematical Cryptology}, 16(1):114--140, 2022.
\newblock Preliminary version: Cryptology ePrint Archive
  (\url{https://eprint.iacr.org/}), Report~2020/1496, 2020.

\bibitem[AY94]{AY94}
V.~A. Artamonov and V.~V. Yashchenko.
\newblock Multibasic algebras in public key distribution systems ({R}ussian).
\newblock {\em Uspekhi Matematicheskikh Nauk}, 49(4(298)):149--150, 1994.
\newblock English translation: Russian Mathematical Surveys, 49(4):145--146,
  1994.

\bibitem[Bab91]{Bab91}
L.~Babai.
\newblock Local expansion of vertex-transitive graphs and random generation in
  finite groups.
\newblock In {\em 23rd Annual ACM Symposium on Theory of Computing (STOC~'91)},
  pages 164--174, 1991.

\bibitem[BB93]{BB93}
R.~Beals and L.~Babai.
\newblock {L}as {V}egas algorithms for matrix groups.
\newblock In {\em 34th Annual Symposium on Foundations of Computer Science
  (FOCS~1993)}, pages 427--436, 1993.

\bibitem[BL96]{BL96}
D.~Boneh and R.~J. Lipton.
\newblock Algorithms for black-box fields and their application to
  cryptography.
\newblock In {\em Advances in Cryptology~--- CRYPTO~'96}, volume 1109 of {\em
  Lecture Notes in Computer Science}, pages 283--297. Sprin\-ger, 1996.

\bibitem[BS84]{BS84}
L.~Babai and E.~Szemer{\'e}di.
\newblock On the complexity of matrix group problems~{I}.
\newblock In {\em 25th Annual Symposium on Foundations of Computer Science
  (FOCS~1984)}, pages 229--240, 1984.

\bibitem[BS12]{BS12}
S.~Burris and H.~P. Sankappanavar.
\newblock {\em A course in universal algebra}.
\newblock Available at
  \url{https://www.math.uwaterloo.ca/~snburris/htdocs/ualg.html}, the
  {M}illennium edition, 2012.

\bibitem[CFW11]{CFW11}
D.~Catalano, D.~Fiore, and B.~Warinschi.
\newblock Adaptive pseudo-free groups and applications.
\newblock In {\em Advances in Cryptology~--- EUROCRYPT~2011}, volume 6632 of
  {\em Lecture Notes in Computer Science}, pages 207--223. Sprin\-ger, 2011.
\newblock Full version: Cryptology ePrint Archive
  (\url{https://eprint.iacr.org/}), Report~2011/053, 2011.

\bibitem[Coh81]{Cohn81}
P.~M. Cohn.
\newblock {\em Universal algebra}, volume~6 of {\em Mathematics and Its
  Applications}.
\newblock D.~Reidel Publishing Company, Dordrecht--Boston--London, revised
  edition, 1981.

\bibitem[CV13]{CV13}
R.~Canetti and V.~Vaikuntanathan.
\newblock Obfuscating branching programs using black-box pseudo-free groups.
\newblock Cryptology ePrint Archive (\url{https://eprint.iacr.org/}),
  Report~2013/500, 2013.

\bibitem[FHI{\etalchar{+}}13]{FHIKS13}
M.~Fukumitsu, S.~Hasegawa, S.~Isobe, E.~Koizumi, and H.~Shizuya.
\newblock Toward separating the strong adaptive pseudo-freeness from the strong
  {RSA} assumption.
\newblock In {\em Information Security and Privacy (ACISP~2013)}, volume 7959
  of {\em Lecture Notes in Computer Science}, pages 72--87. Sprin\-ger, 2013.

\bibitem[FHIS14a]{FHIS14ACISP}
M.~Fukumitsu, S.~Hasegawa, S.~Isobe, and H.~Shizuya.
\newblock On the impossibility of proving security of strong-{RSA} signatures
  via the {RSA} assumption.
\newblock In {\em Information Security and Privacy (ACISP~2014)}, volume 8544
  of {\em Lecture Notes in Computer Science}, pages 290--305. Sprin\-ger, 2014.

\bibitem[FHIS14b]{FHIS14IEICE}
M.~Fukumitsu, S.~Hasegawa, S.~Isobe, and H.~Shizuya.
\newblock The {RSA} group is adaptive pseudo-free under the {RSA} assumption.
\newblock {\em IEICE Transactions on Fundamentals of Electronics,
  Communications and Computer Sciences, Special Section on Cryptography and
  Information Security}, E97.A(1):200--214, 2014.

\bibitem[Fuk14]{Fuk14}
M.~Fukumitsu.
\newblock {\em Pseudo-free groups and cryptographic assumptions}.
\newblock PhD thesis, Department of Computer and Mathematical Sciences,
  Graduate School of Information Sciences, Tohoku University, January 2014.

\bibitem[GLS94]{GLS94}
D.~Gorenstein, R.~Lyons, and R.~Solomon.
\newblock {\em The classification of the finite simple groups}, volume 40.1 of
  {\em Mathematical Surveys and Monographs}.
\newblock American Mathematical Society, Providence, Rhode Island, 1994.
\newblock 2nd printing with corrections, 2000.

\bibitem[Gor82]{Gor82}
D.~Gorenstein.
\newblock {\em Finite simple groups: {A}n introduction to their
  classification}.
\newblock Springer Science+Business Media, New York, 1982.

\bibitem[Hig56]{Hig56}
P.~J. Higgins.
\newblock Groups with multiple operators.
\newblock {\em Proceedings of the London Mathematical Society},
  s3-6(3):366--416, 1956.

\bibitem[HIST09]{HIST09}
S.~Hasegawa, S.~Isobe, H.~Shizuya, and K.~Tashiro.
\newblock On the pseudo-freeness and the {CDH} assumption.
\newblock {\em International Journal of Information Security}, 8(5):347--355,
  2009.

\bibitem[Hoh03]{Hoh03}
S.~R. Hohenberger.
\newblock The cryptographic impact of groups with infeasible inversion.
\newblock Master's thesis, Department of Electrical Engineering and Computer
  Science, Massachusetts Institute of Technology, May 2003.

\bibitem[HT07]{HT07}
T.~Hirano and K.~Tanaka.
\newblock Variations on pseudo-free groups.
\newblock Research Reports on Mathematical and Computing Sciences, Series~C:
  Computer Science C-239, Tokyo Institute of Technology, Department of
  Mathematical and Computing Sciences, January 2007.

\bibitem[IMS03]{IMS03}
G.~Ivanyos, F.~Magniez, and M.~Santha.
\newblock Efficient quantum algorithms for some instances of the non-{A}belian
  hidden subgroup problem.
\newblock {\em International Journal of Foundations of Computer Science},
  14(5):723--739, 2003.
\newblock Preliminary version: Proceedings of the 13th Annual ACM Symposium on
  Parallel Algorithms and Architectures (SPAA~'01), 263--270, 2001.

\bibitem[JB09]{JB09}
M.~P. Jhanwar and R.~Barua.
\newblock Sampling from signed quadratic residues: {RSA} group is pseudofree.
\newblock In {\em Progress in Cryptology~--- INDOCRYPT~2009}, volume 5922 of
  {\em Lecture Notes in Computer Science}, pages 233--247. Sprin\-ger, 2009.

\bibitem[Jon74]{Jon74}
G.~A. Jones.
\newblock Varieties and simple groups.
\newblock {\em Journal of the Australian Mathematical Society}, 17(2):163--173,
  1974.

\bibitem[KM79]{KM79}
M.~I. Kargapolov and Ju.~I. Merzljakov.
\newblock {\em Fundamentals of the theory of groups}, volume~62 of {\em
  Graduate Texts in Mathematics}.
\newblock Sprin\-ger, New York--Heidelberg--Berlin, 1979.
\newblock English translation of the 2nd edition of the original book published
  by Nauka, Moscow, 1977.

\bibitem[KSV02]{KSV02}
A.~Yu. Kitaev, A.~H. Shen, and M.~N. Vyalyi.
\newblock {\em Classical and quantum computation}, volume~47 of {\em Graduate
  Texts in Mathematics}.
\newblock American Mathematical Society, Providence, Rhode Island, 2002.
\newblock English translation of the original book published by MCCME--CheRo,
  Moscow, 1999.

\bibitem[Lom04]{Lom04}
C.~Lomont.
\newblock The {H}idden {S}ubgroup {P}roblem~--- review and open problems.
\newblock arXiv (\url{https://arxiv.org/}), quant-ph/0411037, 2004.

\bibitem[Mic10]{Mic10}
D.~Micciancio.
\newblock The {RSA} group is pseudo-free.
\newblock {\em Journal of Cryptology}, 23(2):169--186, 2010.
\newblock Preliminary version: Advances in Cryptology~--- EUROCRYPT~2005,
  volume~3494 of Lecture Notes in Computer Science, pages~387--403, Sprin\-ger,
  2005.

\bibitem[NC10]{NC10}
M.~A. Nielsen and I.~L. Chuang.
\newblock {\em Quantum computation and quantum information}.
\newblock Cambridge University Press, 10th anniversary edition, 2010.
\newblock Errata list is available at the site of the book
  (\url{https://michaelnielsen.org/qcqi/}).

\bibitem[Neu67]{Neu67}
H.~Neumann.
\newblock {\em Varieties of groups}, volume~37 of {\em Er\-geb\-nis\-se der
  Ma\-the\-ma\-tik und ih\-rer Grenz\-ge\-bie\-te}.
\newblock Sprin\-ger, Berlin--Heidelberg--New York, 1967.

\bibitem[Par11]{Par11}
J.~Partala.
\newblock Key agreement based on homomorphisms of algebraic structures.
\newblock Cryptology ePrint Archive (\url{https://eprint.iacr.org/}),
  Report~2011/203, 2011.

\bibitem[Par15]{Par15}
J.~Partala.
\newblock {\em Algebraic methods for cryptographic key exchange}.
\newblock PhD thesis, Department of Computer Science and Engineering, Faculty
  of Information Technology and Electrical Engineering, University of Oulu,
  March 2015.

\bibitem[Par18]{Par18}
J.~Partala.
\newblock Algebraic generalization of {D}iffie--{H}ellman key exchange.
\newblock {\em Journal of Mathematical Cryptology}, 12(1):1--21, 2018.

\bibitem[Riv04a]{Riv04}
R.~L. Rivest.
\newblock On the notion of pseudo-free groups.
\newblock In {\em Theory of Cryptography (TCC~2004)}, volume 2951 of {\em
  Lecture Notes in Computer Science}, pages 505--521. Sprin\-ger, 2004.

\bibitem[Riv04b]{Riv04pres}
R.~L. Rivest.
\newblock On the notion of pseudo-free groups.
\newblock Available at
  \url{https://people.csail.mit.edu/rivest/pubs/Riv04e.slides.pdf} or
  \url{https://people.csail.mit.edu/rivest/pubs/Riv04e.slides.ppt}, February
  2004.
\newblock Presentation of~\cite{Riv04}.

\bibitem[Rob96]{Rob96}
D.~J.~S. Robinson.
\newblock {\em A course in the theory of groups}, volume~80 of {\em Graduate
  Texts in Mathematics}.
\newblock Sprin\-ger, New York--Berlin--Heidelberg, 2nd edition, 1996.

\bibitem[Rot95]{Rot95}
J.~J. Rotman.
\newblock {\em An introduction to the theory of groups}, volume 148 of {\em
  Graduate Texts in Mathematics}.
\newblock Sprin\-ger, New York et al., 4th edition, 1995.

\bibitem[Sho97]{Sho97}
P.~W. Shor.
\newblock Polynomial-time algorithms for prime factorization and discrete
  logarithms on a quantum computer.
\newblock {\em SIAM Journal on Computing}, 26(5):1484--1509, 1997.
\newblock Preliminary version: Algorithms for quantum computation: Discrete
  logarithms and factoring, 35th Annual Symposium on Foundations of Computer
  Science (FOCS~1994), 124--134, 1994.

\bibitem[Wat01]{Wat01}
J.~Watrous.
\newblock Quantum algorithms for solvable groups.
\newblock In {\em 33rd Annual ACM Symposium on Theory of Computing (STOC~'01)},
  pages 60--67, 2001.

\bibitem[Wec92]{Wech92}
W.~Wechler.
\newblock {\em Universal algebra for computer scientists}.
\newblock Sprin\-ger, Berlin et al., 1992.

\end{thebibliography}

\appendix

\section{Table of Notation}
\label{a:tabofnot}

In this appendix, for the convenience of the reader, we
briefly recall some notation introduced in
Sections~\ref{s:prelim} and~\ref{s:wpsfreefams} (in order of
appearance).

{\renewcommand{\arraystretch}{1.3}
\begin{longtable}{l@{\extracolsep{\fill}}p{.8\textwidth}}
$\N$&$=\{0,1,\dotsc\}$\\

$Y^n$&the set of all (ordered) $n$-tuples of elements
from~$Y$\\

$Y^{\le n}$&$=\bigcup_{i=0}^nY^i$\\

$Y^*$&$=\bigcup_{i=0}^\infty Y^i$\\

$\acl u$&the length of bit string~$u$\\

$1^n$&the unary representation of $n\in\N$, i.e., the string
of $n$ ones\\

$0^n$&the string of $n$ zeros\\

$\oplus$&the bitwise XOR operation\\

$(q_i\st i\in I)$&the family of objects $q_i$ ($i\in I$)\\

$\dom\phi$&the domain of function~$\phi$\\

$\id_Y$&the identity function on~$Y$\\

$\rep t\rho$&an arbitrary preimage of $t$ under
function~$\rho$\\

$\R_+$&the set of all nonnegative real numbers\\

$\Omega$&a set of finitary operation symbols (in all
sections except Section~\ref{s:prelim}, $\Omega$ is finite;
in Section~\ref{s:ptbbgrqalgs}, $\Omega$ is a set of group
operation symbols)\\

$\ar\omega$&the arity of $\omega\in\Omega$\\

$\alg S$&the subalgebra generated by~$S$\\

$\Omega_n$&the set of all $n$-ary operation symbols
in~$\Omega$\\

$\Tm Z$&the $\Omega$-term algebra over~$Z$\\

$\cl V$&a variety of $\Omega$-algebras (in
Section~\ref{s:ptbbgrqalgs}, $\cl V$ is a variety of
groups)\\

$F_\infty(\cl V)$&the $\cl V$-free $\Omega$-algebra freely
generated by $a_1,a_2,\dotsc$\\

$F_m(\cl V)$&$=\alg{a_1,\dots,a_m}$\\

$v(a)$&$=v(a_1,\dots,a_m)$ for $v\in F_m(\cl V)$\\

$v(g)$&$=v(g_1,\dots,g_m)$ for $v\in F_m(\cl V)$ and
$g=(g_1,\dots,g_m)\in G^m$, where $G\in\cl V$\\

CFSG&the Classification of Finite Simple Groups\\

$\supp\pd Y$&the support of probability distribution $\pd Y$
on a finite or countably infinite sample space, i.e.,
$\{y\st\Pro_{\pd Y}\{y\}\ne0\}$\\

$\alpha(\pd Y)$&the image of probability distribution $\pd
Y$ under function~$\alpha$\\

$\rv y_1,\dots,\rv y_n\rvd\pd Y$&means that $\rv
y_1,\dots,\rv y_n$ are independent random variables
distributed according to probability distribution~$\pd Y$\\

$\pd Y^n$&the distribution of a random variable $(\rv
y_1,\dots,\rv y_n)$, where $\rv y_1,\dots,\rv y_n\rvd\pd Y$
for probability distribution~$\pd Y$\\

$K$&an infinite subset of~$\N$\\

$D$&a subset of $\B^*$\\

$(\pd D_k\st k\in K)$&a polynomial-time samplable (when the
indices are represented in unary) probability ensemble
consisting of distributions on~$D$\\

$(D_k\st k\in K)$&a family of nonempty subsets of $D$ such
that there exists a polynomial $\theta$ satisfying
$D_k\subseteq\B^{\le\theta(k)}$ for all $k\in K$\\

$\negl$&an unspecified negligible function on~$K$\\

$\sigma$&a function from a subset of $\B^*$
onto~$F_\infty(\cl V)$\\

$\Lambda(H,\cl V,\sigma,g)$&$=\{(t,u)\in(\dom\sigma)^2\st
\sigma(t),\sigma(u)\in F_m(\cl
V)\cs\sigma(t)\ne\sigma(u)\cs\sigma(t)(g)=\sigma(u)(g)\}$,
where $H\in\cl V$ and $g\in H^m$ ($m\in\N\setminus\{0\}$);
in other words, $\Lambda(H,\cl V,\sigma,g)$ is the union of
the sets $\sigma^{-1}(v)\times\sigma^{-1}(w)$ over all
$v,w\in F_m(\cl V)$ such that $v\ne w$ and $v(g)=w(g)$\\

$\SLP_{\cl V}$, $\SLP$&the function that provides the
representation of elements of $F_\infty(\cl V)$ by
straight-line programs (see Example~\ref{ex:SLPrepres})\\

$A^H$&a black-box $\Omega$-algebra algorithm $A$ performing
a computation in black-box $\Omega$-algebra $H$ and hence
using an $\Omega$-oracle for $H$ (also in the quantum
computation model)\\

$A^{H,O}$&means that algorithm $A^H$ has access to
additional oracle~$O$\\

$Q_n$&the state space of $n$ qubits\\

$Q_n^{\otimes m}$&the $m$th tensor power of~$Q_n$
($m\in\N\setminus\{0\}$)\\

$\ket{y_1}\dots\ket{y_m}$&the state of a system of $m\ge1$
quantum registers, each consisting of $n$ qubits, when its
registers are in states $\ket{y_1},\dots,\ket{y_m}\in Q_n$
(of course, $\ket{y_1}\dots\ket{y_m}=\ket{y_1}\otimes\dots
\otimes\ket{y_m}\in Q_n^{\otimes m}$)\\

$W[i_1,\dots,i_r]$&the unitary operator on $Q_n^{\otimes m}$
(considered as the state space of the quantum system
mentioned in the previous entry) acting as unitary operator
$W\colon Q_n^{\otimes r}\to Q_n^{\otimes r}$ on the system
of registers with numbers $i_1,\dots,i_r$ (taken in this
order; here $i_1,\dots,i_r$ are distinct) and leaving all
other registers unchanged\\

$\CNOT_n$&the unitary operator on $Q_n^{\otimes2}$ such that
$\CNOT_n(\ket v\ket w)=\ket v\ket{v\oplus w}$ for all
$v,w\in\B^n$\\

$H|_\Psi$&the $\Psi$-reduct of $\Omega$-algebra $H$, where
$\Psi\subseteq\Omega$\\

$\cl V|_\Psi$&the variety of $\Psi$-algebras generated by
the $\Psi$-reducts of all $\Omega$-algebras in $\cl V$,
where $\Psi\subseteq\Omega$
\end{longtable}}
\end{document}